\newenvironment{step}
  {
    \begin{enumerate}

  }
  {\end{enumerate}}
\newenvironment{algorithm*}[1]
  {
    \begin{center}
      \hrulefill\\
      \textbf{#1}
  }
  {
    \vspace{-1\baselineskip}
    \hrulefill
    \end{center}
  }
\newenvironment{protocol*}[1]
  {
    \begin{center}
      \hrulefill\\
      \textbf{#1}
  }
  {
    \vspace{-1\baselineskip}
    \hrulefill
    \end{center}
  }
\newtheoremstyle{game}
  {}{}{\setlength{\parskip}{4pt}\setlength{\parindent}{0pt}}{}{\bfseries}{.}{\newline}
  {\thmname{#1}\thmnumber{~#2}\thmnote{: #3}}
\newtheorem{theorem}{Theorem}
\newtheorem{lemma}[theorem]{Lemma}
\newtheorem{claim}[theorem]{Claim}
\newtheorem{definition}[theorem]{Definition}
\theoremstyle{game}
\newcommand{\CHSH}{\text{CHSH}}
\newcommand{\BAD}{\text{BAD}}
\newcommand{\boxA}{\mathcal{A}}
\newcommand{\boxB}{\mathcal{B}}
\newcommand{\beq}{\begin{equation}}
\newcommand{\eeq}{\end{equation}}
\newcommand{\ket}[1]{|#1\rangle}
\newcommand{\bra}[1]{\langle#1|}
\newcommand{\Tr}{\mbox{\rm Tr}}
\newcommand{\Es}[1]{\textsc{E}_{#1}}
\newcommand{\Exs}[2]{\textsc{E}_{#1}\left[ #2 \right]}
\newcommand{\eps}{\varepsilon}
\DeclareMathOperator{\poly}{poly}
\newcommand{\footremember}[2]{%
   \footnote{#2}
    \newcounter{#1}
    \setcounter{#1}{\value{footnote}}%
}
\newcommand{\footrecall}[1]{%
    \footnotemark[\value{#1}]%
}
\begin{document}

\title{Certifiable Quantum Dice\\[2mm] \Large Or, testable exponential randomness expansion }
\author{Umesh Vazirani\footremember{1}{Computer Science division, UC Berkeley, USA. Supported by ARO Grant W911NF-09-1-0440 and NSF Grant CCF-0905626.}\and Thomas Vidick\footrecall{1}}
\date{}
\maketitle

\begin{abstract}
We introduce a protocol through which a pair of quantum mechanical devices may be used to generate $n$ bits of true randomness from a seed of $O(\log n)$ uniform bits. The bits generated are certifiably random based only on a simple  statistical test that can be performed by the user, and on the assumption that the devices obey the no-signaling principle. No other assumptions are placed on the devices' inner workings. A modified protocol uses a seed of $O(\log^3 n)$ uniformly random bits to generate $n$ bits of true randomness even conditioned on the state of a quantum adversary who may have had prior access to the devices, and may be entangled with them. 
\end{abstract}

\section{Introduction}
A source of independent random bits is a basic resource in many modern-day computational tasks, such 
as cryptography, game theoretic protocols, algorithms and physical simulations. Moreover, these tasks place 
different demands on the quality of the randomness (e.g. the need for privacy in cryptographic applications).
It is of great interest, therefore, to construct a physical device for reliably and provably outputting a stream 
of random bits. Testing such a device poses a fundamental problem ---  since all outputs should be output with equal probabilitythere is no basis for rejecting any particular output of the device. 

Starting in the mid-80's, computer scientists considered the question of extracting truly random bits from 
adversarially controlled physical sources of randomness, such as the semi-random source \cite{SV84}, and weak
random sources \cite{Zuc90}. This sequence of papers has culminated in sophisticated algorithms called randomness
extractors that are guaranteed to output a sequence of truly random bits from physical sources of low-quality
randomness (see~\cite{Sha02} for a survey). It was clear, in a classical world, that these results were the best one could hope for --- while it 
was necessary to assume that the physical device outputs randomness (since that could not be tested), 
minimal assumptions were made about the quality of randomness output. 

\medskip

Quantum mechanics provides a surprising path around this fundamental barrier --- it provides a way of testing 
that the output of a certain kind of device is truly random. Recall the famous CHSH game, illustrated in Figure~\ref{fig:chsh}. In this game two \emph{non-communicating} parties, represented by spatially separated boxes $A$, $B$, are given inputs $x,y\in\{0,1\}$ respectively. Their task is to produce outputs $a,b\in\{0,1\}$ such that the \emph{CHSH condition} $a\oplus b=  x\wedge y$ holds. Let $p_{\texttt{CHSH}}$ be the probability that a certain pair of boxes produces outputs satisfying this condition, when the inputs $x,y$ are chosen uniformly at random. 

\begin{figure}[htb!]
\centering%
\includegraphics[scale=.5, angle = 0]{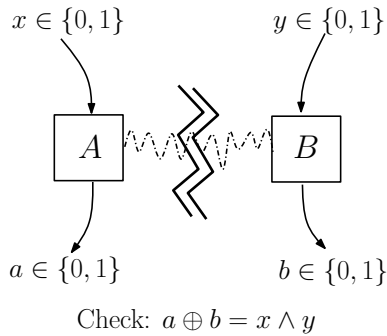}
\caption{The CHSH game. Any pair of boxes $A,B$ is characterized by a distribution $p(a,b|x,y)$ which is required to be \emph{no-signaling}: the marginal distribution of $b$ is independent of $x$, and that of $a$ is independent of $y$.}
\label{fig:chsh}
\end{figure}

Classical players can achieve a success probability at most $p_{\texttt{CHSH}}\leq \frac{3}{4}$, but 
there is a quantum strategy that succeeds with $p_{\texttt{CHSH}} = \cos^2 \pi/8 \approx 0.85$.
Indeed, we may define the \emph{quantum regime} corresponding to success probability 
$3/4<p_{\texttt{CHSH}} \leq \cos^2 \pi/8 \approx 0.85$.
For any value in that range there is a simple quantum-mechanical pair of boxes, still obeying the no-signaling condition, which achieves that success probability.

These well-known facts have a striking consequence: any boxes producing correlations that fall in the quantum regime \emph{must be randomized}! Indeed, deterministic boxes are inherently classical, so that their success probability must fall in the classical regime $p_{\texttt{CHSH}} \leq 3/4$. Hence a simple \emph{statistical test} guaranteeing the presence of randomness, under a single assumption on the process that produced the bits: that it obeys the no-signaling condition. This powerful observation was first made in Colbeck's Ph.D. thesis~\cite{Colbeck09} (see also~\cite{Colbeck11} for an expanded version). The idea was then developed in a paper by Pironio et. al.~\cite{Pironio}, where the first quantitative bounds on the amount of randomness produced were shown. 


\subsection*{An efficient and testable randomness-generation protocol}

This method of generating randomness is not very efficient. Choosing a pair of inputs for the boxes requires $2$ bits of randomness, so the $2$ bits that are output certainly do not contain more randomness than was used.\footnote{In fact, one may show that boxes having a probability of success in the CHSH game that is close to the optimal quantum value produce at most $1.25$ random bits per use, on average~\cite{Pironio}.} 

\begin{figure}
\begin{protocol*}{Protocol~A}
\begin{step}
\item Let $\ell, \Delta$ be two integers given as input. Set $k=\lceil 10\log^2 \ell \rceil$ and $m= \Delta \, \ell$.
\item Choose $T\subseteq [m]$ uniformly at random by selecting each position independently with probability $1/\ell$.
\item Repeat, for $i=1,\ldots,m$:
\begin{step}
\item If $i\notin T$, then 
\begin{step}
\item Set $x=y=0$ and choose $x,y$ as inputs for $k$ consecutive steps. Collect outputs $a,b\in\{0,1\}^k$.
\item If $a\oplus b$ has more than $\lceil 0.16 k \rceil$ $1$'s then reject and abort the protocol. Otherwise, continue.
\end{step}
\item If $i\in T$,
\begin{step}
\item Pick $x,y \in \{0,1\}$ uniformly at random, and set $x,y$ as inputs for $k$ consecutive steps. Collect outputs $a,b\in\{0,1\}^k$.
\item If $a\oplus b$ differs from $x\wedge y$ in more than $\lceil 0.16 k \rceil$ positions then reject and abort the protocol. Otherwise, continue.
\end{step}
\end{step}
\item If all steps accepted, then accept. 
\end{step}
\end{protocol*}
\caption{Protocol~A uses $O(\Delta \log \ell)$ bits of randomness and makes $O(\ell\log^2\ell)$ uses of the boxes. Theorem~\ref{thm:main-class} shows that $\Omega(\ell)$ bits of randomness are produced, with security $\eps = \text{exp}(-\Omega(\Delta))$.}
\label{fig:prot-class}
\end{figure}

Instead, consider the following randomness-efficient protocol. Let $n$ be the target number of random bits to be generated, and $\eps$ a ``security'' parameter. 
Inputs in the protocol are grouped in $m = C\,\lceil  n \log(1/\eps)\rceil$ successive blocks of $k =  10\lceil \log^2 n\rceil$ pairs of inputs each, where $C$ is a large constant. Inputs in a given block consist of a fixed pair $(x,y)$ repeated $k$ times. Most blocks use the $(0,0)$ input, but approximately $10^3\lceil \log (1/\eps)\rceil$ of them are selected at random and marked as ``Bell'' blocks. In those blocks a random pair of inputs $(x,y)\in\{0,1\}^2$ is chosen, and used as inputs throughout the block. Finally, the sequence of outputs produced by the boxes is accepted if, in every block, the CHSH constraint is satisfied by at least $0.84k$ of the blocks's input/output pairs.\footnote{Note that honest boxes, playing each round independently, will indeed satisfy the CHSH condition in each block on average with probability $1-2^{-\Omega(\log^2 n)}$, so that by a union bound it is very unlikely that they will fail the CHSH condition in any of the blocks.} 

\medskip

The following theorem shows that this protocol (formally described as Protocol~A in Figure~\ref{fig:prot-class}) can be used to generate certifiably random bits. 

\begin{theorem}\label{thm:main-class} There exists a constant $C>1$ such that the following holds. Let $\eps>0$ be given, and $n$ an integer. Set $\Delta = 10^3\,\lceil \log(1/\eps)\rceil$, and $\ell = C\,n$. Let $(\boxA,\boxB)$ be an arbitrary pair of no-signaling boxes used to execute Protocol~A, $B$ the random variable describing the bits output by $\boxB$ in protocol~A, and $\CHSH$ the event that the boxes' outputs are accepted in the protocol. Then for all large enough $n$ at least one of the following holds :
\begin{itemize}
\item Either $H_\infty^\eps(B | \CHSH) \geq n,$
\item Or $\Pr\big( \CHSH \big) \leq \eps.$
\end{itemize}
Moreover, Protocol~A requires $O( \log n \log(1/\eps))$ bits of randomness, and makes $O( n \log^2n \log(1/\eps))$ uses of the boxes.  
\end{theorem}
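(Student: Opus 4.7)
My plan is to prove the theorem by a block-by-block analysis that leverages the CHSH inequality and the no-signaling assumption. Throughout I shall assume $\Pr[\CHSH]>\eps$ and aim to prove $H_\infty^\eps(B\mid\CHSH)\geq n$; the contrapositive gives the theorem. The output $B$ has length $mk$, so I cannot hope for a uniform per-round entropy argument against a classical CHSH bound alone: the Bell blocks are only an expected $\Delta=\Theta(\log(1/\eps))$ fraction of the $m$ blocks, contributing at most $\Theta(\Delta k)$ bits of direct randomness, which is $o(n)$ for constant $\eps$. The extra entropy must be harvested from boring blocks via the fact that the boxes do not know in advance which block will be a Bell block.

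The core step is therefore a per-block dichotomy. Fix a block $i\in[m]$ and condition on the history $H_{<i}$ of all protocol randomness, inputs and outputs preceding block $i$. Call block $i$ \emph{heavy} if the conditional distribution of $\boxB$'s $k$-bit output has a mode $\beta^\star_i$ with mass at least $1-\delta$, for a threshold $\delta=\Theta(1/\Delta)$ to be chosen below, and \emph{light} otherwise. Light blocks contribute $\Omega(\delta)$ bits of conditional min-entropy essentially by definition. For heavy blocks I claim an $\Omega(1/\ell)$ lower bound on the block-rejection probability. The argument uses no-signaling: on the event $B_i=\beta^\star_i$ (conditional probability $\geq 1-\delta$), the induced conditional behavior of $\boxA$ is a \emph{local} (single-party) strategy, since by no-signaling the marginal on $\boxA$ is independent of $\boxB$'s input $y$. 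When block $i$ happens to lie in $T$ and is given a uniformly random block-input $(x,y)$, passing the block's CHSH check requires $a\oplus\beta^\star_{i,j}=x\wedge y$ in at least $0.84k$ of the $k$ repeated rounds; but the classical CHSH bound of $3/4$ caps the per-round success probability for any local $\boxA$, and a Chernoff bound (using $k=10\lceil\log^2\ell\rceil$) shows that exceeding $0.84k$ successes has probability $2^{-\Omega(\log^2\ell)}$. Multiplying by $1/\ell$ (the probability that $i\in T$) and absorbing the $\delta$-mass tail yields the claimed $\Omega(1/\ell)$ rejection rate.

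With the dichotomy in hand, I aggregate across blocks. The independence of the indicators $\mathbf{1}[i\in T]$ and the per-block $\Omega(1/\ell)$ rejection give a multiplicative bound: on any history in which at least $r$ blocks are heavy, $\Pr[\CHSH\mid H]\leq (1-c/\ell)^r$. Pitting this against $\Pr[\CHSH]>\eps$ forces $r\leq O(\ell\log(1/\eps))$ on all but an $\eps$-fraction of $\CHSH$-mass, so at least $m-r=\Omega(\Delta\ell)$ blocks are light on the remaining mass. Setting $\delta=\Theta(1/\Delta)$ and choosing $C$ large enough, a min-entropy chain rule then accumulates $\Omega(\delta\cdot\Delta\ell)=\Omega(\ell)=\Omega(n)$ bits of min-entropy from the light blocks, and the $\eps$-slack of the smoothed min-entropy is absorbed into the ``too many heavy blocks'' tail event just bounded.

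The hardest part is the quantitative per-block dichotomy for intermediate concentration: the argument is clean when $B_i$ is deterministic given the history (a one-shot classical CHSH player amplified by Chernoff), but extending it to $\Pr[B_i=\beta^\star_i\mid H_{<i}]\in(1-\delta,1)$ without losing more than $O(\delta)$ in the rejection rate requires careful sub-distribution manipulation that remains valid under no-signaling (in particular, the conditional distribution of $\boxA$ given $B_i=\beta^\star_i$ need not itself be a no-signaling box against $\boxB$, but it does yield a classical strategy once $\boxB$'s output is frozen). A secondary annoyance is the chain rule: since conditional min-entropies do not compose as cleanly as Shannon entropies, I expect to work at the level of joint probabilities $\Pr[B=\beta,\CHSH]$ and telescope per-block factors of the form $1-\Omega(\min(\delta,1/\ell))$, then convert the product bound into the stated smoothed min-entropy via a union bound over ``bad'' $\beta$'s; the remaining work is bookkeeping to fit $\Delta=10^3\lceil\log(1/\eps)\rceil$ and $\ell=Cn$.
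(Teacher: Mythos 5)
Your plan takes a genuinely different route from the paper. The paper proves the theorem by contradiction: assuming both conclusions fail, it applies a chain of Bayes-rule manipulations (isolating a set $\BAD$ of likely strings, then averaging over blocks) to locate a \emph{single} special Bell block in which $\boxB$'s output is near-deterministic and $\CHSH$ still holds with probability $\geq 0.9$, and then feeds that one block into a ``guessing game'' lemma (Lemma~\ref{lem:guessing}) to contradict no-signaling. You instead run a direct, block-by-block entropy-accumulation: classify each block as heavy or light given its history, argue heavy blocks are caught by the Bell test, and count the light blocks' contribution. This is a legitimate alternative structure, and if carried out carefully would produce essentially the same constants; the paper's single-block extraction makes the no-signaling step easier (Lemma~\ref{lem:guessing} is entirely unconditional), while your accumulation avoids the slightly awkward ``$\BAD$-set'' bookkeeping in Appendix~\ref{app:special}.

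Two steps in your heavy-block argument are wrong as written, though, and both are at the exact spot you flag as ``the hardest part.'' First, conditioning on $B_i=\beta^\star_i$ does \emph{not} leave $\boxA$'s conditional behavior independent of $y$: no-signaling controls the unconditioned marginal $p(a\mid x)$, but post-selecting on a particular value of $B_i$ can re-introduce a $y$-dependence (a two-line example: let $a,b$ be perfectly correlated under $(x,y)=(0,1)$ and perfectly anticorrelated under $(0,0)$; both marginals are uniform, but $p(a\mid x{=}0,y,B{=}0)$ flips with $y$). The correct reasoning avoids conditioning on $\boxB$'s output altogether and instead runs the triangle-inequality argument on $\boxA$'s outputs across its two inputs, combined with no-signaling on $\boxB$'s marginal under $y=1$ --- exactly what Lemma~\ref{lem:guessing} does, and what you would need to reproduce inside each heavy block. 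Second, the Chernoff step is a misfire: all $k$ rounds in a block share the \emph{same} $(x,y)$, so the per-round ``success'' indicators are not $k$ independent CHSH trials and you cannot drive the accept probability down to $2^{-\Omega(\log^2\ell)}$. What the no-signaling triangle argument actually gives is a constant (roughly $1/4 - O(\delta)$) rejection probability for a heavy block once it is promoted to a Bell block, not an overwhelming one. This is still enough for your aggregation --- you only need $\Omega(1/\ell)$ per-block rejection, which a constant times the $1/\ell$ Bell-block probability supplies --- but the quantitative claim you made is incorrect and signals that the per-round independence you implicitly assumed is not available. With these two repairs (replace the conditioning step by the paper's triangle argument, and replace the Chernoff claim by the $\Omega(1)$ block-level bound), your outline does go through.
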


We note that the second condition in the theorem is necessary, as there is always an unavoidable chance that the boxes successfully guess their whole inputs, and deterministically produce matching outputs. The theorem guarantees that the probability of this happening can be bounded by an inverse-exponential in the number of random bits used.

The theorem as stated only guarantees that the bits output by the device have large (smooth) min-entropy. In order to obtain bits that are (close to) uniformly random, one may apply an extractor. There exists efficient constructions of such devices which will convert $B$ into roughly $H_\infty^\eps(B|\CHSH)$ bits that are $\eps$-close, in statistical distance, to uniform. In order to do so, the best extractors will require an additional $O(\log n)$ many uniformly random bits to be used as seed~\cite{GUV07}. 

\medskip

Compared to the basic procedure outlined earlier, Protocol~A  uses two main ideas in order to save on the randomness required. The first idea is to restrict the inputs to  $(0,0)$ most of the time. Only a few randomly placed checks (the Bell blocks) are performed in order to verify that the boxes are generating their inputs honestly. This idea was already used in~\cite{Pironio}, and
led to a protocol with a quadratic $\sqrt{n}\to n$ expansion of randomness. 

The second idea is to systematically group inputs to the boxes into blocks of $k$ successive, identical pairs and check that the CHSH correlations are satisfied on average \emph{in every block}. This is necessary: if one was to only check that the CHSH condition is satisfied on average over the whole protocol, then boxes systematically producing the outputs $(0,0)$ would lead to a large --- close to $100\%$ --- violation. Hence the more robust checking that we perform forces the boxes to play ``honestly'' and produce randomness in essentially every block. 

Moreover, the block structure of the inputs also plays a key role in the analysis of the protocol, 
which is based on the definition of a simple ``guessing game'', explained in Section~\ref{sec:guessing}.
The main point is that if box $\boxB$'s output in a certain block is likely to be a particular string, then Alice, given access to $\boxA$, can guess $\boxB$'s input $y\in\{0,1\}$ based on whether $\boxA$'s output is ``close'' or ``far'' in Hamming distance from that particular string. This provides a way for Alice to guess $\boxB$'s input with probability greater than $1/2$, violating the no-signaling condition placed on the boxes. This style of reasoning can be used to establish that $\boxB$'s output must have high min-entropy, thus yielding Theorem~\ref{thm:main-class}. The proof is given in Section~\ref{sec:class-analysis}.

\medskip

To understand the significance of Theorem~\ref{thm:main-class}, it may be instructive to recall the common paraphrasing
of Einstein's quote from his 1926 letter to Max Born expressing his unhappiness with quantum mechanics as
``God does not play dice with the Universe.'' Clearly a device
based on quantum mechanics can be used to generate randomness --- simply prepare a qubit in the $\ket{0}$ state, 
apply a Hadamard gate, and measure the resulting state in the computational basis: the outcome is a uniformly random bit. 
However, in addition to believing the correctness of quantum mechanics, to trust that such a device produces random bits one 
must believe that the manufacturer is trustworthy, experimentally skilled, and that the device is always
well calibrated. These difficulties are compounded by the fact that the postulates of quantum physics forbid any 
classical observer from getting more than a small probabilistic digest of the internal quantum state of the system.
The randomness generation protocol presented above has the property that the output is guaranteed to be 
random based only on the observed correlations in the output (violations of Bell inequalities), and on the relativistic assumption that information does not travel faster than light. In this sense it might be appropriate to deem that it is ``Einstein certifiable''!

\subsection*{Quantum adversaries}

We have described a simple protocol that guarantees the production of bits that are statistically close to uniform. Suppose these random bits were used later in an interactive cryptographic protocol. In that case it is crucial that the bits generated appear close to uniform not only to the (honest) user of the protocol, but also to any adversary in the cryptographic protocol. 

For concreteness, consider the following catastrophic scenario: the maker of the boxes, call her Eve, inserted an undetectable ``back-door'' by not only entangling $\boxA$ and $\boxB$ together, but extending this entanglement to reach into her own, private, laboratory. Eve knows that the protocol mostly uses $0$'s as inputs to $\boxB$. Betting on this she repeatedly makes a specific measurement on her system, which reliably produces the same output as $\boxB$ in case its input was a $0$. If we assume that $\boxB$'s outputs are uniformly distributed then such a strategy does not obviously violate the no-signaling constraint between $\boxB$ and Eve. But Eve learns most of $\boxB$'s output: while in isolation it may be random, it is totally insecure!

\medskip

We rule out this scenario by showing an analogue to Theorem~\ref{thm:main-class} which also holds in the presence of a quantum adversary. The theorem applies to a slight variant of the protocol used in the previous section, described as Protocol~B in Figure~\ref{fig:prot-quant}. The main differences are that the number of random bits used in that protocol is slightly larger, $O(\log^3 n)$ instead of $O(\log n)$, and the protocol is based on using an ``extended'' version of the CHSH game, which will be introduced in Section~\ref{sec:quant}. 

\begin{theorem}\label{thm:main-quant} Let $\alpha,\gamma>0$ be such that $\gamma \leq 1/(10+8\alpha)$, and $n$ an integer. Set $C = \lceil 100\,\alpha \rceil$, and $\ell = n^{1/\gamma}$. Let $(\boxA,\boxB)$ be an arbitrary pair of no-signaling boxes used to execute Protocol~B, $\CHSH$ the event that the boxes' outputs are accepted in the protocol, and $B'$ the random variable describing the bits output by $\boxB$, conditioned on $\CHSH$. 
Let $E$ be an arbitrary quantum system, possibly entangled with $\boxA$ and $\boxB$, but such that no communication occurs between $\boxA,\boxB$ and $E$ once the protocol starts. 
Then for all large enough $n$ at least one of the following holds:
\begin{itemize}
\item Either $H_\infty^\eps(B' | E) \geq n,$
\item Or $\Pr\big( \CHSH \big) \leq \eps,$
\end{itemize}
where $\eps = n^{-\alpha}$. Moreover, Protocol~B uses only $O(\gamma^{-3}\log^3 n)$ bits of randomness. 
\end{theorem}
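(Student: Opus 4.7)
The plan is to adapt the guessing-game argument underlying Theorem~\ref{thm:main-class} so that it accommodates a quantum adversary $E$ entangled with $\boxA$ and $\boxB$. I would argue by contradiction: suppose both $\Pr(\CHSH) > \eps$ and $H_\infty^\eps(B'|E) < n$ hold. The first condition says the boxes play the extended CHSH game well on every block; the second says there is a measurement on $E$ (together with the transcript of prior inputs/outputs) whose outcome matches $B'$ with non-negligible probability. My goal is to combine these two facts to build a ``predictor'' which, used jointly with $\boxA$, guesses $\boxB$'s input better than the no-signaling plus quantum constraints allow.

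First, I would apply a chain rule for smooth conditional min-entropy (or a Markov-style pigeonhole argument over the $\ell$ blocks) to the hypothesis $H_\infty^\eps(B'|E) < n$. Since Protocol~B has $\ell = n^{1/\gamma}$ blocks and the total min-entropy is at most $n$, there must be many blocks $i$ for which the per-block output $B_i'$ has low min-entropy conditioned on $E$ and the transcript of preceding blocks. For such a block one obtains a measurement producing, with noticeable probability, a string that is close in Hamming distance to $B_i'$. This is the quantum-side analog of the ``good guess'' extracted in the classical analysis of Section~\ref{sec:class-analysis}.

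Second, I would plug this block-wise predictor into a tripartite guessing game: $\boxA$ receives its CHSH input, $\boxB$ receives an input from the extended-CHSH input set, and the composite player $(\boxA, E)$ must guess $y$. Conditioning on $\CHSH$, the block-wise CHSH test forces $\boxA$'s output to be strongly correlated with $\boxB$'s output as a function of $y$, while the $E$-side predictor supplies an independent estimate of $B_i'$. Exactly as in the classical proof, comparing the two should let $(\boxA, E)$ distinguish the possible values of $y$ with probability bounded away from $1/2$, and averaging over all blocks exploits the block structure introduced in Protocol~B to boost this advantage into a contradiction with the appropriate no-signaling / quantum bound.

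The main obstacle, and the reason Protocol~B uses the extended CHSH game at the cost of a cubic-logarithmic seed, is that the no-signaling bound which sufficed in the classical case is now much more delicate to establish. A quantum $E$ entangled with $\boxB$ can in principle learn about $y$ through correlated measurements, so one needs a genuinely quantum rigidity statement: near-optimal play in the extended CHSH game must force the joint state to be (approximately) such that measurements on $(\boxA, E)$ cannot distinguish $\boxB$'s different inputs. Making this rigidity quantitative enough to survive the Hamming-slack parameters of the test, robust against the $\eps$-smoothing of min-entropy, and composable across all $\ell$ blocks with only a polynomial loss in $n$, is the technical heart of the argument; the extra inputs of the extended CHSH game and the inflated randomness budget are exactly what enable such a rigidity bound to be proved and amplified.
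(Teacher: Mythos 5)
Your high-level shape — argue by contradiction, turn the failure of the min-entropy bound into a measurement on $E$ that predicts $\boxB$'s output in a ``special'' block, and then team Alice up with the adversary to win the guessing game — does match the paper's proof, and you correctly intuit that the extended CHSH game is what makes the quantum reduction go through. But two of the steps you propose are not the ones the paper takes, and the first of them has a real gap.

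The gap is in how you go from $H_\infty^\eps(B'|E) < n$ to a usable predictor. You propose a chain rule / Markov pigeonhole over blocks to find a block $i$ where $B_i'$ has low min-entropy conditioned on $E$ and the prior transcript, and then ``measure $E$.'' This does not work as stated. There is no off-the-shelf chain rule for quantum smooth conditional min-entropy that lets you decompose $H_\infty^\eps(B'|E)$ along the blocks with only polynomial loss, and even if you could isolate such a block, the resulting statement (``some measurement on $E$ guesses $B_i'$ with noticeable probability'') would depend on conditioning on a specific prefix $B_{<i}$ — information that, in the guessing game, must not be handed from Bob to Alice after the special block, since it would leak $y$. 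What the paper does instead is prove and invoke Lemma~\ref{lem:chain-rule}, a \emph{reconstruction} statement obtained from the quantum-proof analysis of a specific local extractor (Trevisan + $t$-XOR). It gives a single fixed subset $V \subseteq [mk]$ of advice positions of size $O(n\log^2 m)$ and a single measurement on $E$ that, given $X_V$ as advice, reconstructs a string agreeing with $X$ on a $1 - 1/\log m$ fraction of positions with probability $\Omega(\eps^6/m^6)$. Replacing this by a generic pigeonhole/chain-rule step is exactly the place where the quantum case is ``much more delicate,'' and the paper's explicit detour through extractor reconstruction is the missing idea.

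The second divergence is your diagnosis of why Protocol~B switches to the extended CHSH game. You describe it as enabling a ``quantum rigidity'' statement about near-optimal strategies. The paper does not prove or use any rigidity; the guessing-game argument is the same elementary triangle-inequality argument as in Lemma~\ref{lem:guessing}. The real reason for the extended game is much more mundane and ties directly into Lemma~\ref{lem:chain-rule}: the advice bits $B_V$ that Eve needs would ordinarily have to come from Bob, which is illegal. Because the extended game has a shared input $(A,0)$ on which $\boxA$ and $\boxB$ must produce \emph{identical} outputs, and because $V$ is a small fixed set that with high probability avoids the randomly chosen Bell blocks, Alice can hand Eve $A_V$ instead of $B_V$, and conditioned on $\CHSH$ these agree. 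One then still has to show (Claim~\ref{claim:eve-pred}) that Eve's few errors are not concentrated on the Bell blocks, which uses a no-signaling/indistinguishability argument, not rigidity. So the extended game buys you advice-bit passing and a clean identification of a good Bell block, not a rigidity theorem.

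In short: right skeleton, but the load-bearing step (reconstruction via a local quantum-proof extractor, Lemma~\ref{lem:chain-rule}) is absent from your plan and cannot be replaced by a chain-rule pigeonhole, and the role you assign to the extended CHSH game is misidentified.
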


Indication that dealing with quantum, rather than classical, adversaries may present substantial new difficulties may be found in the area of strong extractor constructions. There are examples of such constructions, secure against classical adversaries, that dramatically fail in the presence of quantum adversaries with even smaller prior information~\cite{GKKRW}. Luckily, other constructions, such as a very efficient construction due to Trevisan~\cite{Trevisan01}, have been shown secure even against quantum adversaries~\cite{Ta09,DVPR11}. One may use such a ``quantum-proof'' extractor in order to efficiently transform the bits output in Protocol~B into ones that are statistically close to uniform even from the point of view of the adversary at the cost of an additional $O(\log^2 n)$ bits of fresh randomness.  


A reason to think that the power of a quantum adversary in learning $\boxB$'s output may be limited comes from a delicate property of entanglement, its \emph{monogamy}~\cite{Terhal04}. Informally, monogamy states that a tripartite entangled state $\ket{\Psi}_{ABE}$ cannot be maximally entangled both between $A$ and $B$ and between $B$ and $E$. Since Protocol~B enforces very strict correlations between the outputs of $\boxA$ and $\boxB$, one may hope that these correlations will pre-empt any strong correlation between $\boxB$ and an arbitrary $E$. 

\medskip

Interestingly, the proof of Theorem~\ref{thm:main-quant} makes crucial use of the properties of a specific construction of a quantum-proof extractor, based on Trevisan's construction and the $t$-XOR code, that was first outlined in~\cite{DV09}. This construction is used to prove the following information-theoretic lemma. The lemma gives an operational interpretation to a random variable having small smooth min-entropy conditioned on a quantum system, and may be of independent interest. 

\begin{lemma}\label{lem:chain-rule} Let $\rho_{XE}$ be a state such that $X$ is a classical random variable distributed over $m$-bit strings, and $E$ is an arbitrarily correlated quantum system. Let $\eps,\delta>0$, and $K=H_\infty^\eps(X|E)$. Then there exists a subset $V \subseteq [m]$ of size $v=|V| = O(K\log^2 m)$, and for every $v$-bit string $z$ a measurement $M_z$ on $E$ such that, with probability at least $\Omega(\eps^6/m^6)$, $M_{X_V}$ produces a string $Y$ that agrees with $X$ in a fraction at least $1-\frac{1}{\log m}$ of positions. 
\end{lemma}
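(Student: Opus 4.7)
The plan is to run the quantum-proof Trevisan extractor of \cite{DV09}, built on the $t$-XOR code, in the contrapositive direction: since $X$ has only smooth conditional min-entropy $K$ given $E$, applying the extractor with a sufficiently long output cannot yield a distribution close to uniform, and the reconstruction procedure at the heart of the extractor's security proof will exhibit the required subset $V$ and family of measurements $\{M_z\}$.

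I would instantiate the extractor with $t = \Theta(\log m)$, output length $n = K + \Theta(\log(m/\eps))$, seed length $d = O(\log^2 m)$, and a weak design whose index sets $T_1^S,\ldots,T_n^S \subseteq [m]$ each have size $t$ and pairwise overlap at most $\log m$. Because each output bit $Z_i = \bigoplus_{j \in T_i^S} X_j$ is a deterministic function of $X$, data processing for smooth min-entropy gives $H_\infty^\eps(\text{Ext}(X,S)\,|\,S,E) \leq K$; a distribution $\eps$-close to uniform on $n > K$ bits would have smooth min-entropy at least $n$, so the extractor's output must be $\delta$-far from uniform given $(S,E)$ with $\delta = 1 - O(\eps)$. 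A standard hybrid argument then yields, for some $i \in [n]$ and a noticeable fraction of seeds, a single POVM on $E$ predicting $Z_i$ from $(S, Z_1,\ldots,Z_{i-1})$ with advantage $\Omega(\delta/n) = \Omega(1/n)$.

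I would next use the design property to convert the predictor into a noisy oracle that operates on bits of $X$ directly. Since each earlier $T_j^S$ overlaps $T_i^S$ in at most $\log m$ positions, revealing the bits of $X$ in the union of $T_j^S \setminus T_i^S$ over $j<i$ --- a set of at most $nt = O(K \log^2 m)$ positions --- is enough to compute $Z_1,\ldots,Z_{i-1}$ locally from $X_{T_i^S}$ alone. Fixing one good pair $(i,S)$, I take $V$ to be the union of those positions together with $T_i^S$, so $|V| = O(K\log^2 m)$. The measurement $M_z$ is then the single POVM on $E$ that uses $z = X_V$ to synthesize the inputs to the predictor, runs the quantum list-decoder of \cite{DV09} for the $t$-XOR code on top of the resulting predictor, and returns a uniformly random element of the $\poly(m/\eps)$-size list as $Y$. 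Aggregating the losses --- $\Omega(1/n)$ from the hybrid step, $\Omega(\eps/m)$ from conditioning on a good $(i,S)$, $1/\poly(m/\eps)$ from picking the right element of the list, and the $\eps$-slack from smoothing --- yields the advertised success probability $\Omega(\eps^6/m^6)$, while the $1-1/\log m$ agreement follows from the distance of the $t$-XOR code (with $t = \Theta(\log m)$) combined with the approximation inherent in the quantum list-decoder.

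The main obstacle will be carrying the whole reduction out coherently on the single available copy of $E$: the classical Trevisan reconstruction is adaptive and re-runs the predictor over many seeds, which is illegal here by no-cloning, so one must compose the predictor POVM and the list-decoder POVM into a single measurement. This is precisely the technical content of the quantum-proof analysis of the $t$-XOR extractor in \cite{DV09}, which I would import essentially as a black box, modifying only the bookkeeping so that the ``advice'' is expressed as bits of $X$ at the $O(K\log^2 m)$ positions of $V$ (rather than values of $C(X)$) and so that the $1-1/\log m$ agreement bound is explicitly tracked through the list-decoding step.
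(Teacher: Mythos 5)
Your plan is essentially the same as the paper's: instantiate the $t$-XOR/Trevisan extractor, observe that the assumption of the reconstruction argument is automatically triggered because the output length exceeds the smooth conditional min-entropy of $X$, use the hybrid argument over the weak design to obtain a single-bit predictor given $O(tr)$ advice bits of $X$, replace the classically adaptive re-use of the predictor by a single composite measurement (the paper's version of this is the K\"onig--Terhal pretty-good-measurement argument), and finally apply approximate list-decoding of the $t$-XOR code, outputting a random element of the list. You also correctly single out the one genuinely nontrivial obstacle --- no-cloning forbids re-running the predictor, so one needs one POVM that works for all indices at once --- and correctly observe that this is where the specifically ``quantum-proof'' reconstruction from~\cite{DV09,DVPR11} must be invoked.

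One parameter slip: you fix $t = \Theta(\log m)$ throughout, but the list-decoding radius of the $t$-XOR code is $\delta = (1/t)\ln(2/\eta)$ where $\eta = \Theta(\eps^2/r^2)$ is the post-hybrid, post-K\"onig--Terhal prediction advantage, so $\ln(2/\eta) = \Theta(\log(m/\eps))$; to force $\delta \leq 1/\log m$ one needs $t = \Omega\bigl(\log m \cdot \log(m/\eps)\bigr) = \Omega(\log^2 m)$, which is what the paper takes. With $t=\Theta(\log m)$ your decoded string is only guaranteed to agree with $X$ on a $1-\Theta(1)$ fraction of positions, which does not meet the lemma's $1-1/\log m$ bound (and, relatedly, your own tally $|V| = nt$ comes out to $O(K\log m)$ rather than the $O(K\log^2 m)$ you write). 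Taking $t=\Theta(\log^2 m)$ fixes both at once and matches the stated $|V|$ bound, so this is a bookkeeping error rather than a structural one.
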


In essence Lemma~\ref{lem:chain-rule} states that, given access to some of the bits of $X$ (the ones indexed by $V$), and to the quantum system $E$, one can predict the remainder of the string $X$ with inverse-polynomial success probability. In the range of large $K$ (at least inverse-polynomial in $m$), this is much higher than the inverse-exponential probability that one would get by measuring $E$ directly, without using any ``advice'' bits. 

The proof of lemma~\ref{lem:chain-rule} mostly follows from the proof of security of Trevisan's extractor against quantum adversaries presented in~\cite{DVPR11}. Since however it does not follow as a black-box, we give a detailed outline of the proof of the lemma in Appendix~\ref{app:kxor}.

\paragraph{Related work.} Two concurrent and independent papers, the first by Fehr, Gelles and Schaffner~\cite{FGS11} and the second by Pironio and Massar~\cite{PM11} showed the security of a randomness-generation scheme against quantum adversaries in the generic setting in which the violation of \emph{any} Bell inequality is observed. While this approach initially only leads to a polynomial expansion of randomness, both works show that by combining the use of two pairs of devices (that is, four non-communicating boxes in total), one can also obtain a scheme with exponential expansion (in fact, this idea was already suggested in~\cite{Pironio}). The fact that such a composition technique works crucially relies on the original scheme being secure against quantum adversaries.

The guarantees on the amount of randomness, and its security, that are obtained in these works rely on the estimation of the average violation of a Bell inequality throughout a ``generic'' protocol. In contrast, our result is more tailored to the actual protocol we introduce, as well as to the use of the CHSH inequality itself. We see this as a benefit: by providing a simpler, more direct analysis, we hope that our approach may lead to further improvements, and may be more easily adaptable to a variety of  settings. For instance, taking such a direct approach leads us to a protocol achieving exponential expansion with only one device (two boxes) instead of two. The protocol's simplicity contrasts with the relatively involved composition technique that needs to be performed in order to achieve the same expansion in~\cite{FGS11} and~\cite{PM11}.

Recent work by Colbeck and Renner~\cite{CR11} studies a related question, that of improving the quality of a given source of weak randomness. Specifically, they show that if one is given access to a so-called Santha-Vazirani source then one can produce bits that are guaranteed to be statistically close to uniform by using the violation of a specific Bell inequality by a pair of untrusted no-signaling devices.  

\paragraph{Organization of the paper.} We begin with some preliminaries in Section~\ref{sec:prelim}. In Section~\ref{sec:guessing} we introduce the \emph{guessing game}, an important conceptual tool in the proofs of both Theorem~\ref{thm:main-class} and Theorem~\ref{thm:main-quant}. In Section~\ref{sec:class-analysis} we prove Theorem~\ref{thm:main-class}, while Theorem~\ref{thm:main-quant} is proven in Section~\ref{sec:quant}. The proof of Lemma~\ref{lem:chain-rule} mostly follows from known results, and is relegated to Appendix~\ref{app:kxor}. 


\section{Preliminaries}\label{sec:prelim}

\paragraph{Notation.} Given two $n$-bit strings $x,y$ we let $d_H(x,y) = \frac{1}{n}\sum_{i=1}^n |x_i-y_i|$ denote their relative Hamming distance. For $i\in [n]$, we let $x_i$ be the $i$-th bit of $x$, and $x_{<i}$ its $(i-1)$-bit prefix. 

\paragraph{Classical random variables.}
Given a random variable $X\in\{0,1\}^n$, its min-entropy is 
$$H_\infty(X) = -\log \max_{x} \Pr(X=x).$$
 For two distributions $p,q$ on a domain $D$, their statistical distance is 
$$\|p-q\|_1:=(1/2)\sum_{x\in D} |p(x)-q(x)|_1.$$
 This notion of distance can be extended to random variables with the same range in the natural way.  Given $\eps>0$, the smooth min-entropy of a random variable $X$ is 
$$H_\infty^\eps(X) = \sup_{Y,\, \|Y-X\|_1 \leq \eps} H_\infty(Y).$$
 The following simple claim will be useful. 

\begin{claim}\label{claim:smoothcap} Let $\alpha,\eps>0$ and $X$ a random variable such that $H_\infty^\eps(X) \leq \alpha$. Then there exists a set $B$ such that $\Pr(X\in B) \geq \eps$ and for every $x\in B$, it holds that $\Pr(X=x) \geq 2^{-\alpha}$. 
\end{claim}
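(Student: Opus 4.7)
The plan is to take $B = \{x : \Pr(X=x) \geq 2^{-\alpha}\}$, the set of ``heavy'' atoms of $X$; every element of this $B$ automatically satisfies the required lower bound on its probability, so the only thing to check is that $\Pr(X \in B) \geq \eps$. I will verify this by contraposition: assume $s := \Pr(X\in B) < \eps$, and produce a distribution $Y$ satisfying $\|X - Y\|_1 \leq s < \eps$ and $H_\infty(Y) > \alpha$, which contradicts the hypothesis $H_\infty^\eps(X) \leq \alpha$ through the definition of smooth min-entropy.

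The construction of $Y$ is straightforward. On $B^c$ I leave the distribution unchanged, $\Pr(Y=x) = \Pr(X=x)$, which keeps those atoms strictly below the cap $2^{-\alpha}$ by definition of $B^c$. The total mass $s$ that $X$ placed on $B$ is reassigned to a collection of fresh atoms adjoined to the sample space, split into enough pieces so that each new atom receives strictly less than $2^{-\alpha}$. A direct calculation gives $\|X-Y\|_1 = s$ (the mass removed from $B$ equals the mass added on fresh atoms), and by construction $\max_y \Pr(Y=y) < 2^{-\alpha}$, so $H_\infty(Y) > \alpha$.

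No substantial obstacle arises in this argument; the only minor bookkeeping point is that the preliminaries define the $\ell_1$ distance between distributions on a common domain $D$, whereas I have enlarged the sample space. This is immaterial because one may embed $X$ into any larger ambient domain by adjoining atoms of probability zero without changing either $H_\infty^\eps(X)$ or its statistical distance to $Y$. With this in place the contradiction goes through and the claim follows.
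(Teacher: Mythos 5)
Your proposal is correct and follows essentially the same route as the paper's own proof: define $B$ as the set of heavy atoms, suppose for contradiction $\Pr(X\in B) < \eps$, and construct $Y$ by zeroing out $B$ and redistributing the mass $s$ to fresh low-probability atoms so that $H_\infty(Y) > \alpha$ while $\|X-Y\|_1 = s < \eps$. The paper's version uses atoms of mass $2^{-\alpha-1}$ and is slightly less explicit about the computation of $\|X-Y\|_1$, but the argument is the same.
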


\begin{proof} Let $B$ be the set of $x$ such that $\Pr(X=x)\geq 2^{-\alpha}$, and suppose $\Pr(X\in B) < \eps$.  Define $Y$ so that $\Pr(Y=x)=\Pr(X=x)$ for every $x\notin B$, $\Pr(Y=x)=0$ for every $x\in B$. In order to normalize $Y$, introduce new values $z$ such that $\Pr(X=z)=0$, and extend $Y$ by defining $\Pr(Y=z) = 2^{-\alpha-1}$ until it is properly normalized. Then $\|Y-X\|_1 < \eps$ and $H_\infty(Y) > \alpha$, contradicting the assumption on the smooth min-entropy of $X$.
\end{proof}

\paragraph{Quantum states.}
Let $X$ be a register containing a classical random variable, which we also call $X$, and $E$ a register containing a quantum state, possibly correlated to $X$. Then the whole system can be described using the cq-state (cq stands for classical-quantum) $\rho_{XE} = \sum_{x} p_X(x)\ket{x}\bra{x} \otimes \rho_x$, where for every $x$ $\rho_x$ is a density matrix, i.e. a positive matrix with trace $1$. Given such a state, the guessing entropy $p_{guess}(X|E)$ is the maximum probability with which one can predict $X$, given access to $E$. Formally, it is defined as 
$$p_{guess}(X|E)_\rho\,=\, \sup_{\{M_x\}} \sum_x p_X(x) \Tr\big(M_x \rho_x \big),$$
where the supremum is taken over all projective operator-valued measurements (POVMs) on $E$.\footnote{A POVM $\{M_x\}$ is given by a set of positive matrices which sum to identity. We refer the reader to the standard textbook~\cite{NC00} for more details on the basics of quantum information theory.} The conditional min-entropy can be defined through the guessing entropy as $H_\infty(X|E)_\rho = -\log p_{guess}(X|E)_\rho$~\cite{KRS09}. We will often omit the subscript $\rho$, when the underlying state is clear. The appropriate distance measure on quantum states is the trace distance, which derives from the trace norm $\|A\|_{tr} = \Tr\big(\sqrt{A^\dagger A}\big)$. This lets us define a notion of smooth conditional min-entropy: $H_\infty^\eps(X|E)_\rho = \sup_{\sigma_{XE},\, \|\sigma_{XE}-\rho_{XE}\|_{tr} \leq \eps} H_\infty(X|E)_\sigma$, where here the supremum is taken over all sub-normalized cq-state $\sigma_{XE}$.  As in the purely classical setting, it is known that this measure of conditional min-entropy is the appropriate one from the point of view of extracting uniform bits~\cite{Ren05}: if $H_\infty^\eps(X|E) = K$ then $K - O(\log 1/\eps)$ bits can be extracted from $X$ that are $\eps$-close to uniform, even from the point of view of $E$. 

\paragraph{The CHSH game.} The following game was originally introduced by Clause, Horne, Shimony and Holt~\cite{Clauser:69a} to demonstrate the non-locality of quantum mechanics. Two collaborating but non-communicating parties, Alice and Bob, are each given a bit $x,y\in \{0,1\}$ distributed uniformly at random. Their goal is to produce bits $a,b$ respectively such that $a\oplus b = x\wedge y$. It is not hard to see that classical parties (possibly using shared randomness) have a maximum success probability of $3/4$ in this game. In contrast, quantum mechanics predicts that the following strategy, which we will sometimes refer to as the ``honest'' strategy, achieves a success probability of $\cos^2(\pi/8) \approx 0.85$. Alice and Bob share an EPR pair $\ket{\Psi} = \frac{1}{\sqrt{2}}\ket{00}+\frac{1}{\sqrt{2}}\ket{11}$. Upon receiving her input, Alice measures either in the computational $(x=0)$ or the Hadamard $(x=1)$ basis. Bob measures in the computational basis rotated by either $\pi/8$ $(y=0)$ or $3\pi/8$ $(y=1)$. One can then verify that, for every pair of inputs $(x,y)$, this strategy produces a pair of correct outputs with probability exactly $\cos^2 (\pi/8)$.


\section{The guessing game}\label{sec:guessing}

Consider the following simple guessing game. In this game, there are two cooperating players, Alice and Bob. At the start of the game Bob receives a single bit $y\in\{0,1\}$ chosen uniformly at random. The players are then allowed to perform arbitrary computations, but are not allowed to communicate. At the end of the game Alice outputs a bit $a$, and the players win if $a=y$. 

Clearly, any strategy with success probability larger than $\frac{1}{2}$ indicates a violation of the no-communication assumption between Alice and Bob. At the heart of the proofs of both Theorem~\ref{thm:main-class} and Theorem~\ref{thm:main-quant} is a reduction to the guessing game. Assuming there existed a pair of boxes violating the conclusions of either theorem, we will show how these boxes may be used to devise a successful strategy in the guessing game, contradicting the no-signaling assumption placed on the boxes.

\medskip

To illustrate the main features of the strategies we will design later, consider the following simplified setting. Let $\boxA,\boxB$ be a given pair of boxes taking inputs $X,Y\in \{0,1\}$ and producing outputs $A,B\in\{0,1\}^k$ respectively. Assume the following two properties hold. First, if the input to $\boxB$ is $Y=0$ then its output $B$ is essentially deterministic, in the sense that $B=b_0$ with high probability. Second, whatever their inputs, the boxes' outputs satisfy the CHSH constraint on average: at least $84\%$ of $i\in [k]$ are such that $A_i\oplus B_i = X\wedge Y$. Then we claim that there is a strategy for Alice and Bob in the guessing game, using $\boxA$ and $\boxB$, that succeeds with probability strictly larger than $1/2$, demonstrating that the boxes must be signaling. 

Alice and Bob's strategy is the following. Alice is given access to $\boxA$ and Bob to $\boxB$. Upon receiving his secret bit $y$, Bob inputs it to $\boxB$, collecting outputs $b\in\{0,1\}^k$. Alice chooses an $x\in\{0,1\}$ uniformly at random, and inputs it to $\boxA$, collecting outputs $a\in\{0,1\}^k$. Let $b_{0}$ be the $k$-bit string with the highest probability of being output by $\boxB$, conditioned on $y=0$. 
Alice makes a decision as follows: she computes the relative Hamming distance $d = d_H(a,b_0)$. If $d < 0.2$ she claims ``Bob's input was $0$''. Otherwise, she claims ``Bob's input was $1$''. 

By assumption, if Bob's secret bit was $y=0$, then his output is almost certainly $b_0$. By the CHSH constraint, independently of her input Alice's output $a$ lies in a Hamming ball of radius $0.16$ around $b_0$. So in this
case she correctly decides to claim ``Bob's input was $0$''.

In the case that Bob's secret bit was $y=1$, the analysis is more interesting. Let $b$ be the actual output 
of $\boxB$. Let $a_0$ and $a_1$ be $\boxA$'s output in the two cases $x=0$ and $x=1$ respectively. 
We claim that the Hamming distance $d_H(a_0,a_1)\geq 0.68$. This is because by the CHSH constraint, 
$d_H(a_0,b) \leq 0.16$, while $d_H(a_1,b) \geq 0.84$. Applying the triangle inequality gives the lower bound
on the distance between $a_0$ and $a_1$. This lower bound is large enough that both $a_0$ and $a_1$
cannot lie in the Hamming ball of radius $0.16$ around $b_0$ (observe that this argument makes no 
use of the actual location of $b$!). Thus in the case $y=1$, Alice correctly outputs ``Bob's input was $1$''
with probability at least $1/2$. 

Overall Alice and Bob succeed in the guessing game with probability $3/4$, which contradicts no-signaling.

\medskip

Clearly there is a lot of slack in the above reasoning, since for contradiction it suffices to succeed in the 
guessing game with any probability strictly greater than $1/2$. By being more careful it is possible to 
allow Bob's output on $y=0$ to have more min entropy, as well as allow for a 
small probability that the boxes' outputs may not satisfy the CHSH constraint:

\begin{lemma}\label{lem:guessing} Let $\beta,\gamma>0$ be such that $\gamma+2\beta < 1/4$, and $k$ an integer. Suppose given a pair of boxes $\boxA,\boxB$, taking inputs $X,Y\in \{0,1\}$ and producing outputs $A,B\in\{0,1\}^k$ each. Suppose the following conditions hold:
\begin{enumerate}
\item When given input $0$, the distribution of outputs of $\boxB$ has low min-entropy: there exists a $b_0\in\{0,1\}^k$ such that $\Pr(B=b_0|Y=0) \geq 1-\gamma$,
\item The boxes' outputs satisfy the CHSH condition, on average: 
$$\Pr\big( \,\#\big\{i\in[k],\,A_i\oplus B_i \neq X\wedge Y\big\}\, >\, 0.16\,k \,\big) \,\leq\, \beta.$$
\end{enumerate}
Then there is a strategy for Alice and Bob, using $\boxA$ and $\boxB$, with gives them success probability strictly greater than $1/2$ in the guessing game.
\end{lemma}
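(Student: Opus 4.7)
The plan is to formalize the strategy sketched informally just before the lemma. Bob, on receiving his secret bit $y$, feeds $y$ to $\boxB$ and discards the output; Alice, holding $\boxA$, draws $x \in \{0,1\}$ uniformly at random, feeds $x$ to her box to collect an output $a \in \{0,1\}^k$, and guesses ``$y=0$'' iff $d_H(a, b_0) < 0.2$, where $b_0$ is the mode of $\boxB$'s output distribution on input $0$ supplied by hypothesis~(1). The no-signaling assumption on $(\boxA,\boxB)$ ensures this respects the no-communication requirement of the guessing game, and plays a crucial role in the analysis via the fact that the marginal of $a$ (resp.\ $b$) depends only on $x$ (resp.\ $y$).

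The $y=0$ case I handle by a straightforward union bound: with probability at least $1-\gamma$ we have $B=b_0$, and with probability at least $1-\beta_0$, where $\beta_0 := \Pr(\neg\CHSH\mid y=0)$, the CHSH condition $a_i = b_i$ holds for at least $0.84k$ positions, giving $d_H(a,b_0) \leq 0.16 < 0.2$ and hence a correct guess. So $\Pr(\text{correct}\mid y=0) \geq 1 - \gamma - \beta_0$.

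The substantive case is $y=1$, where I aim to show $\Pr(\text{correct}\mid y=1) \geq 1/2 - \beta_1$ for $\beta_1 := \Pr(\neg\CHSH\mid y=1)$. The key is a deterministic case split on the observed value of $d_H(b,b_0)$ that pairs with Alice's random choice of $x$. When CHSH holds with $x=0$ we have $d_H(a,b)\leq 0.16$, so $d_H(a,b_0)\geq d_H(b,b_0)-0.16$; when CHSH holds with $x=1$ we have $d_H(a,b)\geq 0.84$, so $d_H(a,b_0) \geq 0.84 - d_H(b,b_0)$. Splitting at $d_H(b,b_0)=0.36$, Alice outputs ``$y=1$'' correctly whenever CHSH holds and either ``$x=0$ and $d_H(b,b_0) \geq 0.36$'' or ``$x=1$ and $d_H(b,b_0) < 0.36$''. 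The crucial observation is that by no-signaling the distribution of $b$ conditional on $y=1$ is independent of $x$, so averaging over the uniformly chosen $x$ the favourable $(x,b)$-pattern has probability exactly $1/2$ regardless of how $b$ is distributed, from which we subtract at most $\beta_1$ for the CHSH-failure event.

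Averaging over $y$ gives $\Pr(\text{correct}) \geq 3/4 - \gamma/2 - (\beta_0+\beta_1)/2$, and since $\beta_0+\beta_1 = 2\Pr(\neg\CHSH) \leq 2\beta$, the overall success probability strictly exceeds $1/2$ under the hypothesis $\gamma + 2\beta < 1/4$ with substantial slack. The hardest part, I expect, is resisting the natural but flawed attempt to reason about a joint distribution over the two hypothetical outputs $a_0, a_1$ that $\boxA$ would have produced on both inputs---no such joint is well defined for a general no-signaling box. The right move, used above, is to condition on Alice's single chosen $x$ and invoke no-signaling to make the marginal of $b$ independent of $x$, which is exactly what produces the clean $1/2$ on the $y=1$ side.
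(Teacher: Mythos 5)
Your proposal is correct and follows the same high-level route as the paper: the same Alice strategy (threshold $d_H(a,b_0)$ at $0.2$) and the same decomposition into the $y=0$ and $y=1$ cases, with the $y=0$ analysis essentially identical. Where you differ is in the $y=1$ case: the paper argues via Markov plus a triangle inequality phrased in terms of what $\boxA$ "would have" output on $x=0$ versus $x=1$, whereas you condition on Bob's realized output $b'$, split deterministically on $d_H(b',b_0)\gtrless 0.36$ to identify a single "safe" input $x^*(b')$, and then invoke no-signaling once to get $\Pr(X=x^*(B')\mid Y=1)=1/2$ before subtracting the CHSH-failure mass. This is cleaner and gives tighter constants ($\Pr(\text{correct}\mid y=1)\geq 1/2-\beta_1$ and overall $\geq 3/4-\gamma/2-\beta$, which needs only $\gamma+2\beta<1/2$). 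One small caveat: the paper's "triangle inequality between $a_0$ and $a_1$" does not actually commit the error you worry about, since it can be read as a statement about the two disjoint sets $\{a:d_H(a,b')\leq 0.16\}$ and $\{a:d_H(a,b')\geq 0.84\}$ for a fixed $b'$ rather than about a joint distribution over counterfactual outputs; still, your formulation makes the single appeal to no-signaling explicit and is easier to audit, so the criticism is well taken as a matter of exposition even if the original argument is salvageable.
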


\begin{proof}
Alice and Bob's strategy in the guessing game is as described above. Let $b_{0}$ be the $k$-bit string that is most likely to be output by $\boxB$, conditioned on $y=0$. 

\medskip

We first show that, if Bob's input was $y=0$, then Alice claims that Bob had a $0$ with probability at least $1-\gamma-2\beta$. By the first condition in the lemma, Bob obtains the output $b_{0}$ with probability at least $1-\gamma$. Moreover, by the second condition the CHSH constraint will be satisfied with probability at least $1-2\beta$ on average over Alice's choice of input, given that Bob's input was $y=0$. Given $y=0$, whatever the input to $\boxA$ the CHSH constraint states that $d_H(a,b)<0.16$. Hence by a union bound Alice will obtain an output string $a$ at relative Hamming distance at most $0.16$ from $b_{0}$ with probability at least $1-\gamma-2\beta$.

\medskip

Next we show that, in case Bob's input in the guessing game is $y=1$, Alice claims that Bob had a $1$ with probability at least $\frac{1}{2}\big(1-8\beta)$.
Let $b'$ the actual output produced by Bob.
By the second condition in the lemma and Markov's inequality, with probability at least $1-4\beta$ the output $b'$ is such that the CHSH constraint will be satisfied with probability at least $1-4\beta$ simultaneously for both of Alice's possible choices of input. 

Suppose this holds. If Alice chooses $x=0$ then the CHSH constraint indicates that the corresponding $a_{0}$ should be such that $d_H(a_{0},b')\leq 0.16$, while in case she chooses $x=1$ her output $a_{1}$ should satisfy $d_H(a_{1},b') \geq 0.84$. By the triangle inequality, $d_H(a_{0},a_{1}) \geq 0.68$: whatever the value of $b'$, only one of $a_{0}$ or $a_{1}$ can be at distance less than $0.2$ from $b_0$. 
By a union bound, with probability at least $1-8\beta$ there is a choice of input for Alice that will make her claim Bob had a $1$, and she chooses that input with probability $1/2$. 

\medskip

The two bounds proven above together show that Alice's probability of correctly guessing Bob's input in the guessing game is at least
$$p_{succ}\,\geq\, \frac{1}{2}\big(1-2\gamma\big) + \frac{1}{2} \frac{1-8\beta}{2} \,=\, \frac{1}{2} + \Big( \frac{1}{4}-2\beta-\gamma\Big),$$
which is greater than $1/2$ whenever $2\beta+\gamma < 1/4$, proving Lemma~\ref{lem:guessing}.  
\end{proof}


\section{Proof of the main result}\label{sec:class-analysis}

Theorem~\ref{thm:main-class} asserts that, given any pair $(\boxA,\boxB)$ of non-signaling boxes, if the outputs of $\boxB$ do not contain much min-entropy (when its inputs are chosen as in Protocol~A, described in Figure~\ref{fig:prot-class}), then the boxes can only satisfy the CHSH constraints imposed in the protocol with small probability. 

We prove Theorem~\ref{thm:main-class} by a reduction to the guessing game introduced in Section~\ref{sec:guessing}. Suppose that there existed a pair of boxes such that neither of the theorem's conclusions was satisfied. Recall that Protocol~A calls for a total of $mk$ uses of the boxes, divided into $m$ blocks of $k$ pairs of identical inputs each. 
 We show that, provided the $\CHSH$ constraints are satisfied in all blocks with non-negligible probability, there must exist a special block $i_0\in [m]$ in which the boxes' outputs, conditioned on specific past values, have properties close to those required in Lemma~\ref{lem:guessing}. This lets us carry out a reduction to the guessing game, leading to a contradiction of the no-signaling assumption. 
The exact properties of the special block that we obtain are described in Claim~\ref{claim:special-it} below.

\paragraph{Modeling events in the protocol.} 
To model the situation, we introduce four sequences of random variables $X=(X_i),Y=(Y_i),A=(A_i),B=(B_i) \in \big(\{0,1\}^{k}\big)^m$, where $m$ is the number of blocks of the protocol. $X$ and $Y$ are distributed as in Protocol~A, and $A,B$ are random variables describing the boxes' respective outputs when their inputs are $X$ and $Y$. 
For $i\in [m]$, let $\CHSH_i$ be the event that $d_H(A_i\oplus B_i , X_i \wedge Y_i) \leq 0.16$, and $\CHSH = \bigwedge_i \CHSH_i$.  We will also use the shorthand $\CHSH_{< i } = \bigwedge_{j<i} \CHSH_j$. Finally, we let $T_j$ be a random variable denoting the $j$-th Bell block, chosen jointly by Alice and Bob at the start of Protocol~A. 

\begin{claim}\label{claim:special-it} There exists a constant $C>1$ such that the following holds. Let $2^{-Cn}<\eps<1/5$ and $\Delta=10^3\lceil \log (1/\eps)\rceil$. Suppose that (i) $H_\infty^\eps(B|CHSH) \leq n$, and (ii)  $\Pr(\CHSH) \geq \eps$. Let $m=C\,\Delta n$. Then for all large enough $n$ there exists an index $j_0$ and a set $G$ satisfying $\Pr(G)\geq \eps^5$ such that the following hold.  
\begin{itemize} 
\item $\boxB$'s output in the $j_0$-th Bell block ${T_{j_0}}$ is essentially deterministic: 
\beq\label{eq:spec-cond2}
\forall b\in G,\qquad \Pr(B_{T_{j_0}} = b_{T_{j_0}} | \CHSH_{<{{T_{j_0}}}}, B_{<{T_{j_0}}} = b_{<{T_{j_0}}})\geq 0.99,
\eeq
\item The CHSH condition is satisfied with high probability in the $j_0$-th Bell block ${T_{j_0}}$: 
\beq\label{eq:spec-cond1a}
\forall b\in G,\qquad\Pr(\CHSH_{T_{j_0}} | \CHSH_{<{T_{j_0}}},B_{< {T_{j_0}}} = b_{< {T_{j_0}}}) \geq 0.9.
\eeq
\end{itemize}
\end{claim}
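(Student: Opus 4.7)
The plan is to identify a heavy subset of sample paths via the smooth min-entropy hypothesis, apply the chain rule to limit the number of blocks where either the output of $\boxB$ fails to be pinned down or the CHSH condition is unlikely, and then pigeonhole over Bell-block indices.

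First I would apply Claim~\ref{claim:smoothcap} to $B \mid \CHSH$, obtaining a set $\hat{B}$ with $\Pr(B \in \hat{B} \mid \CHSH) \geq \eps$ and $\Pr(B = b \mid \CHSH) \geq 2^{-n}$ for each $b \in \hat{B}$; combined with $\Pr(\CHSH) \geq \eps$ this gives $\Pr(B = b, \CHSH) \geq 2^{-n} \eps$ for every such $b$. Restricting $T$ to its typical set (say $|T| \in [\Delta/2, 2\Delta]$, which holds with probability $1-o(1)$ by Chernoff) and peeling off pairs $(b, T')$ of atypically small conditional mass via a Markov-style argument on the sum $\Pr(B=b,\CHSH) = \sum_{T'} \Pr(T=T')\Pr(B=b,\CHSH\mid T=T')$, one extracts a set $\hat{G}$ of pairs with total probability $\Omega(\eps^2)$ satisfying
\[ \Pr(B = b, \CHSH \mid T = T') \,\geq\, 2^{-n - O(\log(1/\eps))}. \]

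Next, for each $(b, T') \in \hat{G}$ I decompose $\Pr(B = b, \CHSH \mid T = T') = \prod_{i=1}^m \tilde{p}_i(b, T')\tilde{q}_i(b, T')$ via the chain rule, where $\tilde{p}_i = \Pr(B_i = b_i \mid T = T', B_{<i} = b_{<i}, \CHSH_{<i})$ and $\tilde{q}_i = \Pr(\CHSH_i \mid T = T', B_{\leq i} = b_{\leq i}, \CHSH_{<i})$. Taking logarithms gives $\sum_i \bigl( -\log \tilde{p}_i - \log \tilde{q}_i \bigr) \leq n + O(\log(1/\eps))$, so by Markov's inequality the number of \emph{bad} blocks (those with $\tilde{p}_i < 0.99$ or $\tilde{q}_i < 0.9$) is at most $c_0 (n + \log(1/\eps))$ for an explicit constant $c_0$. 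I then average over $j_0 \in \{1, \ldots, \lfloor \Delta/2 \rfloor\}$: since $T$ is selected independently of the boxes' internals, the expected number of bad Bell blocks is roughly $|\BAD| \cdot \Delta / m = O\bigl((n + \log(1/\eps))/(Cn)\bigr) = O(1/C)$, much smaller than $\Delta/2$ when $C$ is chosen large enough, so that the expected number of good Bell blocks among the first $\lfloor \Delta/2 \rfloor$ is at least $\Delta/4$. A pigeonhole then yields a fixed $j_0$ such that, conditionally on $(B,T) \in \hat{G}$, block $T_{j_0}$ is good with probability at least $1/2$; letting $G$ be this sub-event of $\hat G$ gives $\Pr(G) \geq \Pr(\hat G)/2 = \Omega(\eps^2) \geq \eps^5$ in the relevant range of parameters, and \eqref{eq:spec-cond2}, \eqref{eq:spec-cond1a} then follow directly from the definition of ``not bad''.

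The main obstacle is the decorrelation step: rigorously showing that $E_{\hat{G}}\bigl[|\BAD(b, T') \cap T'|\bigr] = O(|\BAD|/\ell)$. A priori, conditioning on $\hat{G}$ (which involves $B$, hence indirectly $T$) may skew the marginal of $T$, and $\BAD$ itself depends on $T'$ through the inputs the boxes receive. The cleanest way through this, I expect, is to interchange the order of summation and bound $\sum_i \Pr\bigl(i \in \BAD(b, T'),\,T_i = 1 \mid \hat{G}\bigr)$ directly, exploiting that $T_i$ is Bernoulli$(1/\ell)$ independently of the past together with the per-path chain-rule bound on $|\BAD|$; choosing the ``typical $T$'' threshold and the Markov cutoff conservatively gives enough slack to absorb the conditioning losses into the constants.
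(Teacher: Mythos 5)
Your overall strategy (apply Claim~\ref{claim:smoothcap}, chain-rule over blocks, intersect with the Bell blocks, pigeonhole) is the same as the paper's, but you make a structurally different choice: you condition on $T = T'$ and run a \emph{single} chain rule that splits each factor into $\tilde{p}_i\tilde{q}_i$. The paper instead keeps $T$ random throughout and uses \emph{two} separate chain rules — one over all blocks to locate a fixed set $S$ of good indices (yielding condition~\eqref{eq:spec-cond2}), and a second one, conditioned on a heavy event $\BAD$, over Bell blocks only (yielding condition~\eqref{eq:spec-cond1a}); the final $G$ comes from a union bound over these two events.

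The decorrelation step as you describe it does not work, and this is a genuine gap. From $\Pr(B=b,\CHSH)\geq 2^{-n}\eps$, one cannot extract a set $\hat G$ of pairs $(b,T')$ of total probability $\Omega(\eps^2)$ with $\Pr(B=b,\CHSH\mid T=T')\geq 2^{-n-O(\log 1/\eps)}$. You only have a lower bound on the \emph{average} $\Es{T'}\big[\Pr(B=b,\CHSH\mid T=T')\big]\geq 2^{-n}\eps$, and lower bounds on averages say nothing about how much mass is on ``typical'' $T'$: all of the conditional mass could sit on an exponentially small set of $T'$ (e.g.\ equal to $1$ on a $2^{-n}\eps$-fraction of $T'$ and $0$ elsewhere), in which case the set you wish to call $\hat G$ has negligible probability. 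Markov does not help here because the bound runs the wrong way.

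A related problem, which you do flag, is that once you condition on $\hat G$ the $T_i$'s are no longer Bernoulli$(1/\ell)$ independent of everything else, so the estimate $\Es{}[|\BAD(b,T')\cap T'|] = O(1/C)$ does not follow from ``$T_i$ independent of the past'' as written. The paper's approach is designed precisely to avoid both issues: by never fixing $T$, it can (a) identify a deterministic subset $S\subseteq[m]$ of good block indices via the first chain rule, at which point $|T\cap S|\geq \Delta/4$ is a clean Chernoff statement, and (b) run the second chain rule conditioned only on $\BAD$ (a set of $b$'s), which is why the refinement of $\BAD'$ to $\BAD$ (insisting on $\Pr(\CHSH\mid B=b)\geq \eps^2/2$, which your sketch omits) is necessary — the second chain rule needs $\Pr(\CHSH\mid\BAD)$ bounded below, not just $\Pr(B=b\mid\CHSH)$ bounded below. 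There is also a minor structural mismatch: the claim's $G$ is a set of strings $b$, whereas your $\hat G$ lives in $(b,T')$-pairs and would need to be projected.
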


The proof of Claim~\ref{claim:special-it} mostly follows from an appropriate chained application of Baye's rule, and is given in Appendix~\ref{app:special}. In order to conclude the proof of Theorem~\ref{thm:main-class} it remains to show  how the special block identified in Claim~\ref{claim:special-it} can be used to show that boxes $\boxA$ and $\boxB$ satisfying the claim's assumptions may be used successfully in the guessing game. 

Consider the following strategy for Alice and Bob in the guessing game. In a preparatory phase (before Bob receives his secret bit $y$), Alice and Bob run protocol~A with the boxes $\boxA$ and $\boxB$, up to the $i_0$-th block (excluded). Bob communicates $\boxB$'s  outputs up till that block to Alice. Together they check that the CHSH constraint is satisfied in all blocks preceding the $i_0$-th; if not they abort. They also verify that Bob's outputs are the prefix of a string $b\in G$; if not they abort. The guessing game can now start: Alice and Bob are separated and Bob is given his secret input $y$. 

Given the conditioning that Alice and Bob have performed before the game started, once it starts boxes $\boxA$ and $\boxB$ can be seen to satisfy both conditions of Lemma~\ref{lem:guessing}. Indeed, since under the input distribution specified in Protocol~A $\boxB$ receives a $0$ as input in block $i_0$ with probability at least $1/2$, condition 1. in Lemma~\ref{lem:guessing} holds with $\gamma = 1/50$ as a consequence of item~1 in Claim~\ref{claim:special-it}. Condition~2 in Lemma~\ref{lem:guessing} puts a bound on the probability of the $\CHSH$ condition being satisfied under the uniform input distribution. Given that in Protocol~A inputs in a Bell block are chosen according to the uniform distribution as well, item~2 from Claim~\ref{claim:special-it} implies that condition~2 holds with $\beta = 1/10$. Since $\gamma+2\beta = 0.22 < 1/4$, Lemma~\ref{lem:guessing} concludes that the boxes~$\boxA$ and $\boxB$ must be signaling in the $i_0$-th block, a contradiction. This finishes the proof of Theorem~\ref{thm:main-class}.


\section{Producing random bits secure in the presence of a quantum adversary}\label{sec:quant}

In this section we prove Theorem~\ref{thm:main-quant}. We first give an overview of the proof, describing the main steps, in the next section. The formal proof is given in Section~\ref{sec:proof-quant}

\subsection{Proof overview}
Theorem~\ref{thm:main-quant} is based on Protocol~B, a variant of Protocol~A which replaces the use of the CHSH game by the following ``extended'' variant.  In this game each box may receive one of four possible inputs, labeled $(A,0),(A,1),(B,0),(B,1)$. An input such as ``$(A,1)$'' to either box means: ``perform the measurement that $\boxA$ would have performed in the honest CHSH strategy, in case its input had been a $1$''. The advantage of working with this game is that there exists an optimal strategy (the one directly derived from the honest CHSH strategy) in which both players always output identical answers when their inputs are equal. 

\begin{figure}
\begin{protocol*}{Protocol~B}
\begin{step}
\item Let $\ell, C$ be two integers given as input. Set $k=\lceil 10\log^2 \ell \rceil$ and $m= \lceil C\ell\log^2 \ell \rceil$.
\item Choose $T\subseteq [m]$ uniformly at random by selecting each position independently with probability $1/\ell$.
\item Repeat, for $i=1,\ldots,m$:
\begin{step}
\item If $i\notin T$, then 
\begin{step}
\item Set $x=y=(A,0)$ and choose $x,y$ as inputs for $k$ consecutive steps. Collect outputs $a,b\in\{0,1\}^k$.
\item If $a\neq b$ then reject and abort the protocol. Otherwise, continue.
\end{step}
\item If $i \in T$,
\begin{step}
\item Pick $x\in\{(A,0),(A,1)\}$ and $y\in\{(A,0),(B,0)\}$ uniformly at random, and set $x,y$ as inputs for $k$ consecutive steps. Collect outputs $a,b\in\{0,1\}^k$.
\item If either $a= b$ and $x = y$, or $d_H(a,b) \leq 0.16$ and $y = (B,0)$, or $d_H(a,b) \in [0.49,0.51]$ and $x =(A,1)$ and $y = (A,0)$ then continue. Otherwise reject and abort the protocol. 
\end{step}
\end{step}
\item If all steps accepted, then accept. 
\end{step}
\end{protocol*}
\caption{Protocol~B uses $O(\log^3 \ell)$ bits of randomness and makes $O(\ell\log^4\ell)$ uses of the boxes. Theorem~\ref{thm:main-quant} shows that $\Omega(\ell^{\gamma})$ bits of randomness are produced, where $\gamma>0$ is a constant depending on the security parameter $\eps$ one wants to achieve.}
\label{fig:prot-quant}
\end{figure}

Protocol~B follows the same structure as Protocol~A. Inputs are divided into groups of $k = \lceil 10 \log^2 n\rceil$ identical inputs. There are $m = O(n^{1/\delta}\log^2 n)$ successive blocks, where $\delta>0$ is a small parameter. Most blocks use the same input $(A,0)$ to both boxes. A random subset $T\subseteq[m]$ of approximately $\log^2 n$ blocks are designated as Bell blocks. In such blocks $\boxA$ is given an input at random in $\{(A,0),(A,1)\}$, while $\boxB$ is given an input at random in $\{(A,0),(B,0)\}$. 

\medskip

As in the proof of Theorem~\ref{thm:main-class} we will prove Theorem~\ref{thm:main-quant} by contradiction, through a reduction to the guessing game. In the non-adversarial case the crux of the reduction consisted in identifying a special block $i_0 \in [m]$ in which $\boxB$'s output $B$ was essentially deterministic, conditioned on past outputs. In the adversarial setting, however, $B$ may be perfectly uniform, and such a block may not exist. Instead, we start by assuming for contradiction that the min-entropy of Bob's output conditioned on Eve's information is small: $H_\infty^\eps(B|E) \leq n$. 

Previously in the guessing game Alice tried to guess Bob's secret input $y\in \{0,1\}$. She did so by using her prediction for $\boxB$'s outputs, together with the CHSH constraint and her own box $\boxA$'s outputs. Here we team up Alice and Eve. Alice will provide Eve with some information she obtained in previous blocks of the protocol, and based on that information Eve will attempt to make an accurate prediction for $\boxB$'s outputs in the special block. Alice will then use that prediction to guess $y$, using as before the CHSH constraint and her own box $\boxA$'s outputs. 

\paragraph{The reconstruction paradigm.}
We would like to show that, under our assumption on $H_\infty^\eps(B|E)$, Eve can perform the following task: accurately predict (part of) $B$, given auxiliary information provided by Alice. We accomplish this by using the ``reconstruction'' property of certain extractor constructions originally introduced by Trevisan~\cite{Trevisan01}. Recall that an extractor is a function which maps a string $B$ with large min-entropy (conditioned on side information contained in $E$) to a (shorter) string $Z$ that is statistically close to uniform even from the point of view of an adversary holding $E$. The reconstruction proof technique proceeds as follows: Suppose an adversary breaks the extractor. Then there exists another adversary who, given a small subset of the bits of the extractor's input as ``advice'', can reconstruct the \emph{whole} input. Hence the input's entropy must have been at most the number of advice bits given. 

For the purposes of constructing extractors, one would then take the contrapositive to conclude that, provided the input has large enough entropy, the extractor's output must be indistinguishable from uniform, thereby proving security. Here we work \emph{directly} with the reconstruction procedure. Suppose that $B$ has low min-entropy, conditioned on Eve's side information. If we were to apply an extractor to $B$ in order to extract \emph{more} bits than its conditional min-entropy, then certainly the output would not be secure: Eve would be able to distinguish it from a uniformly random string. The reconstruction paradigm states that, as a consequence, there is a strategy for Eve that successfully predicts the \emph{entire} string $B$, given a subset of its bits as advice --- exactly what is needed from Eve to facilitate Alice's task in the guessing game. 

\paragraph{The $t$-XOR extractor.} At this stage we are faced with two difficulties. The first is that the reconstruction paradigm was developed in the context of classical adversaries, who can repeat predictive measurements at will. Quantum information is more delicate, and may be  modified by the act of measuring. The second has to do with the role of the advice bits: since they come from $\boxB$'s output $B$ we need to ensure that, in the guessing game, Alice can indeed provide this auxiliary information to Eve, {\em without} communicating with Bob. 

In order to solve both problems we focus on a specific extractor construction, the $t$-XOR extractor $E_t$ (here $t$ is an integer such that $t = O(\log^2 n)$). For our purposes it will suffice to think of $E_t$ as mapping the $mk$-bit string $B$ to a string of $r\ll n$ bits, each of which is the parity of a certain subset of $t$ out of $B$'s $mk$ bits. Which parities is dictated by an extra argument to the extractor, its seed, based on the use of combinatorial designs. Formally,
\begin{align*} 
E_t\,:\, \{0,1\}^{mk} \times \{0,1\}^s &\to \,\{0,1\}^r\\
(b,y)\qquad &\mapsto \,\big( C_t^1(b,y),\ldots, C_t^r(b,y) \big),
\end{align*}
where $C_t^i(b,y)$ is the parity of a specific subset of $t$ bits of $x$, depending on both $i$ and $y$. 

Suppose that Eve can distinguish the output of the extractor $Z=E_t(B,Y)$ from a uniformly random string with success probability $\eps$. In the first step of the reconstruction proof, a hybrid argument is used to show that Eve can predict the parity of $t$ bits of $B$ chosen at random with success $\eps/r$, given access to the parities of $O(r)$ other subsets of $t$ bits of $B$ as advice. This step uses specific properties of the combinatorial designs.

The next step is the most critical. One would like to argue that, since Eve can predict the parity of a random subset of $t$ of $B$'s bits, she can recover a string that agrees with \emph{most} of the $t$-XORs of $B$. One could then appeal to the approximate list-decoding properties of the $t$-XOR code in order to conclude that Eve may deduce a list of guesses for the string $B$ itself. Since, however, Eve is quantum, the fact that she has a measurement predicting \emph{any} $t$-XOR does not imply she has one predicting \emph{every} $t$-XOR: measurements are destructive and distinct measurements need not be compatible. This is a fundamental difficulty, which arises e.g. in the analysis of random access codes~\cite{ANTV02}. To overcome it one has to appeal to a subtle argument due to Koenig and Terhal~\cite{KT07}. They show that without loss of generality one may assume that Eve's measurement has a specific form, called the \emph{pretty-good measurement}. One can then argue that this specific measurement may be refined into one that predicts a guess for the whole list of $t$-XORs of $B$, from which a guess for $B$ can be deduced by list-decoding the $t$-XOR code. 

The security of the $t$-XOR extractor against quantum adversaries 
was first shown by Ta-Shma~\cite{Ta09}, and later improved in~\cite{DV09,DVPR11}. As such, the argument above is not new. Rather, our contribution is to observe that it proves \emph{more} than just the extractor's security. Indeed, summarizing the discussion so far we have shown that, if $H_\infty^\eps(B|E) \leq n$, then there is a measurement on $E$ which, given a small amount of information about $B$ as advice, reconstructs a good approximation to the \emph{whole} string $B$ with success probability $\poly(\eps/r)$. (This is essentially the statement that is made in Lemma~\ref{lem:chain-rule}.) Most crucially, the bits of information required as advice are localized to a small subset of bits of $B$, of the order of the number of bits of information Eve initially has about that string. This property holds thanks to the specific extractor we are using, which is \emph{local}: every bit of the output only depends on few bits of the input.

\paragraph{Completing the reduction to the guessing game.}
In the guessing game it is Alice who needs to hand the advice bits to Eve. Indeed, if Bob, holding box $\boxB$, was to hand them over, they could leak information about his secret input $y$: some of the advice bits may fall in blocks of the protocol that occur \emph{after} the special block $i_0$ in which Bob is planning to use his secret $y$ as input. This leak of information defeats the purpose of the guessing game, which is to demonstrate signaling between $\boxA$ and $\boxB$. 

Hence the ``extended'' variant of the CHSH game introduced in Protocol~B: since in most blocks the inputs to both $\boxA$ and $\boxB$ are identical, by the extended CHSH constraint enforced in the protocol their outputs should be identical. The relatively few advice bits needed by Eve occupy a fixed set of positions, and with good probability all Bell blocks will fall outside of these positions, in which case Alice can obtain the advice bits required by Eve directly from $\boxA$'s outputs. 

The proof of Theorem~\ref{thm:main-quant} is now almost complete, and one may argue as in Lemma~\ref{lem:guessing} that Alice and Eve together will be able to successfully predict Bob's secret input in the guessing game, contradicting the no-signaling assumption placed on $\boxA$ and $\boxB$. A more detailed proof of the theorem is given in the next section.

\subsection{Proof of Theorem~\ref{thm:main-quant}}\label{sec:proof-quant}

We proceed to formally prove Theorem~\ref{thm:main-quant}, using Lemma~\ref{lem:chain-rule} to perform a reduction to the guessing game (Lemma~\ref{lem:chain-rule} is proved in Appendix~\ref{app:kxor}). Protocol~B is described in Figure~\ref{fig:prot-quant}. It consists of $m = \lceil C\ell\log^2 \ell\rceil$ blocks of $k=\lceil 10\log^2 n\rceil$ rounds each, where $C$ is a large constant, $\ell = n^{1/\gamma}$ and $n$ is the target amount of min-entropy. Each round of the protocol selects inputs to the boxes coming from the ``extended CHSH'' game. That game has four questions per party: $(A,0),(A,1),(B,0),(B,1)$. We expect honest boxes to apply the following strategy. They share a single EPR pair, and perform the same measurement if provided the same input. On input $(A,0)$ the measurement is in the computational basis, and on input $(A,1)$ it is in the Hadamard basis $\{\ket{+},\ket{-}\}$, with the outcome $\ket{+}$ being associated with the output '$0$'. On input $(B,0)$ the measurement is in the basis $\{\cos^2 (\pi/8) \ket{0} + \sin^2 (\pi/8) \ket{1}, \sin^2 (\pi/8) \ket{0} - \cos^2 (\pi/8) \ket{1}\}$, with the first vector being associated with the outcome '$0$'. 

\paragraph{Modeling.} To model the situation, introduce four sequences of random variables $X=(X_i),Y=(Y_i),A=(A_i),B=(B_i) \in \big(\{0,1\}^{k}\big)^m$. $X$ and $Y$ are distributed as in protocol~B, while $A,B$ are random variables describing the boxes' respective outputs when their inputs are $X$ and $Y$. 
For $i\in [m]$, let $\CHSH_i$ be the following event:
$$ \CHSH_i \,=\, \begin{cases} A_i = B_i & \text{ if } X_i = Y_i,\\ d_H(A_i,B_i) \leq 0.16 & \text{ if } Y_i = (B,0),\\  d_H(A_i,B_i) \in [0.49,0.51] &\text{ if } X_i = (A,1)\text{ and } Y_i = (A,0). \end{cases}$$
Honest CHSH boxes as described above satisfy $\CHSH_i$ with probability $1-2^{-\Omega(k)}$. Let $\CHSH = \bigwedge_i \CHSH_i$. 

We introduce two new random variables to model the adversary Eve's behavior, when she performs the measurement promised by Lemma~\ref{lem:chain-rule}. We use $E^A = (E^A_i) \in \big( \{0,1\}^k \big)^m$ to denote the outcome of that measurement when the required advice bits are the bits $A_V$ taken from $\boxA$'s outputs, and $E^B = (E^B_i) \in \big( \{0,1\}^k \big)^m$ to denote its outcome when they are the bits $B_V$ taken from $\boxB$'s output (here $V$ is a fixed subset of $[km]$ that will be specified later). Let $G^A$ be the event that $d_H(E^A,B) < f_e$, and $G^B$ the event that $d_H(E^B,B) < f_e$, where $f_e>0$ is a parameter to be specified later. Let $j\in T$ be an index that runs over the blocks that have been designated as Bell blocks in the protocol (where $T$ itself is a random variable).
Given a Bell block $j$, let $G^A_j$ be a boolean random variable such that $G^A_j= 1$ if and only if either $d_H( E^A_j, B ) < 0.01$ and $Y_j = (A,0)$, or $d_H(E^A_j,B) < 0.17$ and $Y_j = (B,0)$. Define $G^B_j$ symmetrically with respect to $E^B$ instead of $E^A$. 

\medskip

We prove Theorem~\ref{thm:main-quant} by contradiction. Assume that both the theorem's conclusions are violated, so that (i)  $H_\infty^\eps(B' | E) \geq n,$ where $B'$ is a random variable describing the distribution of $\boxB$'s outputs conditioned on $\CHSH$, and $\Pr\big( \CHSH \big) \leq \eps$. Here $\eps = n^{-\alpha}$, where $\alpha>0$ is a parameter. 

\medskip

The first step is to apply Lemma~\ref{lem:chain-rule} with $X=B'$. The conclusion of the lemma is that there exists a subset $V\subseteq [km]$ of size $|V| = O(m^{\gamma}\log^2 m)$ such that, letting $f_e = 1/(\log mk)$, we have $p_s := \Pr(G^B | \CHSH ) = \Omega(\eps^7/n^6) = \Omega( n^{-7(\alpha+\gamma)})$. 

$G^B$ denotes the event that Eve correctly predicts $B$ on a fraction at least $1-f_e$ of positions.
 Since in Protocol~B the Bell blocks form only a very small fraction of the total, a priori it could still be that Eve's prediction is systematically wrong on all Bell blocks, preventing us from successfully using them in the guessing game.

The following claim shows Eve's errors cannot be concentrated in the Bell blocks. The intuition is the following. If $\boxB$'s input in a Bell block is $(A,0)$ then nothing distinguishes this block from most others, so that Eve's prediction has no reason of being less correct than average. However, blocks in which its input is $(B,0)$ are distinguished. We rule out the possibility that Eve's errors are concentrated in such blocks by appealing to the no-signaling condition between Eve and $\boxA$. Indeed, about half of Bell blocks in which $\boxB$'s input is $(B,0)$ are such that $\boxA$'s input for the same block is $(A,0)$: looking only at $\boxA$'s inputs they are indistinguishable from most other blocks. We will argue that, as long as the CHSH constraint is satisfied, Eve might as well have been given the advice bits by Alice, in which case there is no reason for her to make more errors than average in those blocks. 

\begin{claim}\label{claim:eve-pred} Let $T$ be the set of Bell blocks selected in Protocol~B. Then there exists a constant $c_e < 10^{-3}$ such that the following holds. 
$$
\Pr\Big( \Es{j\in T} \big[\,G^A_j \,\big]\,>\, 1-\frac{c_e}{\log n} ,\, \CHSH\Big) \, = \, \Omega(p_s\eps)\,=\,\Omega\big(n^{-8(\alpha+\gamma)}\big).
$$
\end{claim}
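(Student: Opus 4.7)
The plan is to combine Lemma~\ref{lem:chain-rule} with no-signaling between $\boxA$, $\boxB$, and $E$. First apply Lemma~\ref{lem:chain-rule} to the $mk$-bit random variable $B$ conditioned on $\CHSH$: the assumption $H_\infty^\eps(B'\mid E) \leq n$ yields a fixed subset $V \subseteq [mk]$ of size $|V| = O(n\log^2(mk))$ and a family of measurements on $E$ whose output $E^B$ satisfies $\Pr(G^B \mid \CHSH) \geq p_s$. Let $W$ be the event that no index in $V$ lies inside a Bell block. Since $T$ is picked independently of $V$ with block-density $1/\ell = n^{-1/\gamma}$, $\Pr(\neg W) \leq |V|/\ell$ is polynomially smaller than $p_s\eps$ under $\gamma \leq 1/(10+8\alpha)$, so conditioning on $W$ is essentially free.

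The first step is to move from $E^B$ to $E^A$. Under $\CHSH \wedge W$ the bit-wise equality enforced in every non-Bell block forces $A_V = B_V$, so Eve's measurement receives the same advice string whether we pass her $A_V$ or $B_V$. Hence $(E^A, B)$ and $(E^B, B)$ share the same conditional distribution given $\CHSH \wedge W$, so $\Pr(G^A \wedge W \wedge \CHSH) = \Omega(p_s\eps)$.

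The second step is a simple pigeonhole: under $G^A$ the total Hamming distance $d_H(E^A, B) \cdot mk$ is at most $mk/\log(mk)$, so at most $O(m/\log n)$ blocks can have block-error at least $0.01$; call this set $B^*$. Any Bell block $j \in T$ with $G^A_j = 0$ must lie in $B^*$, so bounding $|T \cap B^*|$ suffices.

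The remaining, and main, obstacle is to show that $|T \cap B^*| \leq c_e|T|/\log n$ with constant probability conditional on $G^A \wedge W \wedge \CHSH$. This is where no-signaling enters crucially, since $T$ and $B^*$ are correlated through the inputs $(X, Y)$. The argument is a case analysis on $(X_j, Y_j)$ for each Bell block. For $Y_j = (A, 0)$, no-signaling between $\boxA$ and $\boxB$ makes the conditional distribution of $B_j$, and under $W$ of the pair $(B_j, E^A_j)$, indistinguishable from that of a non-Bell block, so Eve cannot preferentially concentrate her errors on such Bell blocks. For $Y_j = (B, 0)$ we further split on $X_j$: when $X_j = (A, 0)$ (half the cases) the block looks like a non-Bell block from $\boxA$'s side, so $E^A_j$ predicts $A_j$ within relative Hamming distance $0.01$ with the same probability as in a non-Bell block, and combining with the CHSH bound $d_H(A_j, B_j) \leq 0.16$ the triangle inequality yields $d_H(E^A_j, B_j) \leq 0.17$ as required by $G^A_j$. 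Averaging over the random placement of $T$ and applying a Markov-type concentration argument to the count $|T \cap B^*|$ gives the stated bound with $\Omega(1)$ conditional probability, preserving the $\Omega(p_s\eps)$ bound overall.
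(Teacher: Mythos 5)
Your overall architecture mirrors the paper's: apply Lemma~\ref{lem:chain-rule}, condition on the Bell blocks avoiding $V$, use the exact equality enforced in non-Bell blocks under $\CHSH$ to identify $E^A$ with $E^B$, and then argue that the adversary's errors cannot be concentrated in Bell blocks by exploiting the indistinguishability (to $\boxB$, and to Eve) of $(A,0)$-input Bell blocks from non-Bell blocks. The pigeonhole reformulation via the set $B^*$ is a harmless repackaging of the same underlying count. However, two genuine issues remain.

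First, your case analysis is incomplete. You handle $Y_j=(A,0)$ and the half of the $Y_j=(B,0)$ Bell blocks for which $X_j=(A,0)$, but say nothing about the quarter of Bell blocks with $(X_j,Y_j)=((A,1),(B,0))$. From $\boxA$'s side these blocks are distinguished, so no ``looks like a non-Bell block'' argument applies directly; from $\boxB$'s side they are also distinguished. The concluding sentence invoking a ``Markov-type concentration argument'' over the placement of $T$ cannot close this gap, because in a $(B,0)$-input block the membership $j\in T$ is certain, so there is no placement randomness left to average over, and Eve's errors could in principle be adversarially placed on exactly these blocks. The paper closes this case by a counting argument: it first establishes, via a Chernoff bound over the independent choice of $X_j$, that $G^B_j$ holds on most Bell blocks with $X_j=(A,0)$, and then argues that if $G^B_j$ failed on a noticeably larger fraction of \emph{all} Bell blocks, at least $40\%$ of those failures would land on $X_j=(A,0)$ blocks (again by independence of the random choice of $X_j$ from $G^B_j$), a contradiction. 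This two-step transfer, $\boxB$'s $(A,0)$-input Bell blocks $\to$ $\boxA$'s $(A,0)$-input Bell blocks $\to$ all Bell blocks, is the idea you are missing.

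Second, a more technical but real concern: you condition on $\CHSH\wedge W$ early and then argue that the conditional distribution of $(B_j,E^A_j)$ in a $Y_j=(A,0)$ Bell block matches a non-Bell block. Before conditioning, the pair $(B,E^B)$ indeed depends only on $Y$ (by no-signaling and the fact that Eve is isolated), so the choice of which $(A,0)$-input blocks are labeled Bell is independent of $(B,E^B)$. But once you condition on $\CHSH$ --- which depends on $A$, hence on $X$, hence on $T$ --- that independence is destroyed, so the ``indistinguishability'' claim requires justification it no longer has. The paper sidesteps this by running the Chernoff argument on $E^B$ \emph{unconditionally}, and only then conditioning on $\CHSH$ at the explicit cost of a $1/(p_s\eps)$ blow-up in the failure probability, tracked through Equations~\eqref{eq:evepred0b}--\eqref{eq:evepred2}. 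Your proof should perform the same reordering: establish the concentration statements for $G^B_j$ first, then condition on $\CHSH$ and pay the factor, and only then convert to $G^A_j$.
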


The proof of Claim~\ref{claim:eve-pred} is given in Appendix~\ref{app:special}. Based on this claim we can show an analogue of Claim~\ref{claim:special-it} which will let us complete the reduction to the guessing game. Claim~\ref{claim:eve-pred} shows that with probability $\Omega(p_s \eps)$ Eve's prediction will be correct on a fraction at least $1-c_e/\log n$ of Bell blocks. Since there are $O(\log^2 n)$ such blocks in Protocol~B, with the same probability Eve only makes errors on a total number $w_e = O(\log n)$ of Bell blocks. Group the Bell blocks in groups of $20w_e$ successive blocks, and let $k$ be an index that runs over such groups; there are $O(\log n)$ of them. Let $G^A_k$ be the event that Eve's prediction is correct in at least $99\%$ of the Bell blocks in group $k$: $G^A_k = 1$ if and only if $\Es{j\sim k} G^A_j \geq 0.99$, where the average is taken over the Bell blocks comprising group $k$. By Markov's inequality, it follows from Claim~\ref{claim:eve-pred} that $\Pr(\wedge_k G^A_k,\,\CHSH) = \Omega(p_s \eps)$. 

\begin{claim}\label{claim:special-quant} For all large enough $n$ there exists a Bell block $j_0\in T$ such that, in that block, it is highly likely that both Eve's prediction (when given advice bits from $\boxA$'s output) is correct and the CHSH constraint is satisfied, conditioned on this being so in past iterations:
\beq\label{eq:spec-quant-1}
\Pr( G^A_{j_0}, CHSH_{j_0} | CHSH_{j<{j_0}}, G^A_{k<k_0}) \geq 0.98,
\eeq
where $k_0$ is the index of the group containing the $j_0$-th Bell block. 
\end{claim}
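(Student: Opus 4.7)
The plan is a proof by contradiction via chained Bayes' rule, analogous to the proof of Claim~\ref{claim:special-it}. Suppose for contradiction that for every Bell block $j_0$ (lying in some group $k_0$) we have $\Pr(G^A_{j_0}, \CHSH_{j_0} \mid \CHSH_{<j_0}, G^A_{<k_0}) < 0.98$; abbreviate $F_j := G^A_j \wedge \CHSH_j$. Decompose the quantity $\Pr(\wedge_k G^A_k \wedge \CHSH)$ using the chain rule across the $K = O(\log n)$ groups:
$$\Pr(\wedge_k G^A_k \wedge \CHSH) \,=\, \prod_{k_0=1}^{K} \Pr\bigl(G^A_{k_0}, \CHSH_{\text{group }k_0} \,\bigm|\, \CHSH_{<k_0}, G^A_{<k_0}\bigr).$$

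To bound each per-group factor, note that $G^A_{k_0} \wedge \CHSH_{\text{group }k_0}$ requires at least $0.99 M$ of the $M = 20 w_e = \Theta(\log n)$ Bell blocks in the group to satisfy $F_j$, plus $\CHSH_j$ on every block of the group. A further chain rule over the Bell blocks in group $k_0$, combined with the contradiction hypothesis applied at each step and a Chernoff/Azuma-style tail bound on the number of ``failures'' $\neg F_j$, should give a per-group factor of $\exp(-\Omega(M)) = n^{-\Omega(1)}$. Multiplying over the $K = \Theta(\log n)$ groups yields $\Pr(\wedge_k G^A_k \wedge \CHSH) \leq n^{-\Omega(\log n)}$, which is asymptotically much smaller than the lower bound $\Omega(n^{-8(\alpha+\gamma)})$ established immediately after Claim~\ref{claim:eve-pred}. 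This contradicts that lower bound for $n$ large enough, completing the proof.

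The main obstacle is that the contradiction hypothesis only bounds $\Pr(F_j \mid \CHSH_{<j}, G^A_{<k_j})$ --- conditioning merely on previous CHSH constraints and on previous group events --- whereas a direct martingale/Chernoff expansion inside group $k_0$ would naturally condition on the full realization of measurement outcomes for the Bell blocks preceding $j'_s$ within the same group, which carries strictly more information. I expect this gap to be bridged by combining the law of total probability (to transfer the hypothesis bound onto any coarsening of the conditioning) with a Markov argument: the set of ``bad'' within-group histories along which the per-step conditional probability of $F_{j'_s}$ exceeds $0.98$ must have vanishingly small measure, so one may argue that $\Pr(F_{j'_s} \mid \mathcal{F}_s) < 0.98 + o(1)$ holds on a set of overwhelming measure, which is sufficient for the concentration step. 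An analogous bookkeeping argument presumably appears in the proof of Claim~\ref{claim:special-it} in the appendix.
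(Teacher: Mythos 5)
You correctly identified the key difficulty with your plan, but your proposed Markov fix does not close it. To run a Chernoff/Azuma-style concentration inside a group you need the per-step conditional probability $\Pr(F_{j'_s} \mid \mathcal{F}_s)$ to be bounded away from $1$ when $\mathcal{F}_s$ records all previous Bell-block outcomes, whereas the contradiction hypothesis only controls $\Pr(F_{j'_s} \mid \CHSH_{<j'_s}, G^A_{<k_0})$, i.e.\ conditioned on a single fixed event. Markov's inequality in the direction you want yields only $\Pr\big(\Pr(F_{j'_s} \mid \mathcal{F}_s) > 0.99\big) < 0.98/0.99$ from $\Ex{\Pr(F_{j'_s} \mid \mathcal{F}_s)} < 0.98$; this bound is close to $1$, not close to $0$, so the ``bad'' set of within-group histories need not have small measure. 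Indeed a marginal bound $\Pr(F_j \mid \cdot) < 0.98$ is perfectly compatible with $\Pr(F_j \mid \mathcal{F}_s) = 1$ on a sub-history of constant measure, which defeats any martingale concentration. The step ``per-group factor is $\exp(-\Omega(M))$'' is therefore unjustified (it is also far stronger than the contradiction requires).

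The paper's proof is not by contradiction and uses no concentration inequality at all. From $\Pr(\wedge_k G^A_k,\, \CHSH) = \Omega(p_s\eps) = \Omega(n^{-8(\alpha+\gamma)})$ established just before the claim, the chain rule over the $K = \Omega(\log n)$ groups plus a straightforward averaging of the $-\log$ of the per-group factors produces a group $k_0$ whose conditional probability is at least $0.99$ (provided $K$ is a sufficiently large multiple of $\log n$, which the protocol's parameters arrange). The essential point you are missing is that $G^A_{k_0}$ is \emph{defined} as the event that a $0.99$ fraction of group $k_0$'s Bell blocks satisfy $G^A_j$; combined with $\CHSH$, on the event $G^A_{k_0}\wedge\CHSH$ the fraction of blocks with $F_j := G^A_j\wedge\CHSH_j$ is at least $0.99$ \emph{deterministically}. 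Hence the conditional expected fraction of good blocks in the group is at least $0.99\times 0.99 > 0.98$, and a trivial average over the $M$ blocks in the group then yields a single $j_0$ with $\Pr(F_{j_0}\mid\cdot) \geq 0.98$. Replacing your within-group Chernoff with this averaging --- which exploits the $0.99$-majority built into the definition of $G^A_{k_0}$ rather than trying to re-derive concentration --- is what makes the argument go through.
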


\begin{proof}
 By the chain rule, since there are $O(\log n)$ groups there will exist a group $k_0$ in which Eve's prediction is correct, and the CHSH condition is satisfied, with probability at least $0.99$, when conditioned on the same holding of all previous groups. Since by definition Eve being correct in the group means that she is correct in $99\%$ of that group's blocks, there is a specific block $j_0$ in which she is correct with probability at least $0.98$. 
\end{proof}

The reduction to the guessing game should now be clear, and follows along the same lines as the proof of Theorem~\ref{thm:main-class} given in Section~\ref{sec:class-analysis}. Alice and Bob run protocol~B, including the selection of all Bell blocks $T$, with the boxes $\boxA$ and $\boxB$, up to the $j_0$-th Bell block (excluded). Bob communicates $\boxB$'s  outputs up till that block to Alice. They check that the CHSH constraint is satisfied in all blocks previous to the $j_0$-th; if not they abort. The guessing game can now start: Alice and Bob are separated and Bob is given his secret input $y$. If $y=0$ then he chooses $(A,0)$ as input to $\boxB$ in the $j_0$-th block; otherwise he chooses $(B,0)$. He then completes the protocol honestly. Alice chooses an input $x\in\{(A,0),(A,1)\}$ at random for the $j_0$-th block, and then completes the protocol honestly. 

In order to help her guess Bob's input, Alice has access to the eavesdropper Eve. Alice gives the bits $a_V$ taken from $\boxA$'s output string $a$ as advice bits to Eve. Eve makes a prediction $e$ for Bob's output. Alice checks that the event $G^A_{<k_0}$ is satisfied. If not she aborts. If so, by Claim~\ref{claim:special-quant} we know that both $\CHSH_{j_0}$ and $G^A_{j_0}$ are satisfied with probability at least $0.98$, so this must be so with probability at least $0.92$ for each of the four possible pair of inputs $(x,y)$ given to $\boxA$ and $\boxB$ in the $j_0$-th block. 

\medskip

Alice makes her prediction as follows: if either $\boxA$'s input was $(A,0)$ and its output agrees with Eve's prediction on at least a $0.99$ fraction of positions (in the $j_0$-th block), or $\boxA$'s input was $(A,1)$ and its output agrees with Eve's prediction on a fraction of positions that is between $0.48$ and $0.52$ she claims ``Bod had a 0''. Otherwise she claims ``Bob had a 1''. 

Clearly if Bob is using $(A,0)$ as his input then Alice will predict correctly with probability at least $0.92$, since in that case $G^A_{j_0}$ implies that Eve predicts $\boxB$'s output with at most $1\%$ of error.  If he is using $(B,1)$ then $G^A_{j_0}$ implies that Eve's prediction will be within $0.17$ relative Hamming distance of $\boxB$'s output in block $j_0$. By the CHSH constraint $\boxA$'s output must also be within $0.16$ of $\boxB$'s output, whatever input Alice chooses. Hence $\boxA$'s output is always within $0.43 < 0.49$ of $\boxB$'s, meaning Alice will correctly claim Bob had a $1$ whenever her input is $(A,1)$. Hence in that case she correctly predicts Bob's input with probability at least $0.92/2$.

Overall, conditioned on Alice not aborting her prediction is correct with probability at least $0.69$ over the choice of a random input for Bob, indicating a violation of the no-signaling assumption on the boxes and proving Theorem~\ref{thm:main-quant}. 

\paragraph{Acknowledgments.} We thank Matthew Coudron for useful comments on a preliminary version of this manuscript. 

\bibliography{randomness}

\newcommand{\etalchar}[1]{$^{#1}$}
\begin{thebibliography}{ANTSV02}

\bibitem[ANTSV02]{ANTV02}
A.~Ambainis, A.~Nayak, A.~Ta-Shma, and U.~Vazirani.
\newblock Dense quantum coding and quantum finite automata.
\newblock {\em Journal of the ACM}, 49(4):496--511, 2002.

\bibitem[CHSH69]{Clauser:69a}
J.~F. Clauser, M.~A. Horne, A.~Shimony, and R.~A. Holt.
\newblock {Proposed experiment to test local hidden-variable theories}.
\newblock {\em Phys. Rev. Lett.}, 23:880--884, 1969.

\bibitem[CK11]{Colbeck11}
R.~Colbeck and A.~Kent.
\newblock Private randomness expansion with untrusted devices.
\newblock {\em Journal of Physics A: Mathematical and Theoretical},
  44(9):095305, 2011.

\bibitem[Col09]{Colbeck09}
R.~Colbeck.
\newblock {\em Quantum And Relativistic Protocols For Secure Multi-Party
  Computation}.
\newblock PhD thesis, Trinity College, University of Cambridge, November 2009.

\bibitem[CR11]{CR11}
R.~Colbeck and R.~Renner.
\newblock Free randomness amplification.
\newblock arXiv:1105.3195, 2011.

\bibitem[DPRV09]{DVPR11}
A.~De, C.~Portmann, R.~Renner, and T.~Vidick.
\newblock {Trevisan's extractor in the presence of quantum side information}.
\newblock {\em CoRR}, abs/0912.5, 2009.

\bibitem[DV10]{DV09}
A.~De and T.~Vidick.
\newblock Near-optimal extractors against quantum storage.
\newblock In {\em Proceedings of the 42nd ACM Symposium on Theory of
  Computing}, pages 161--170, New York, NY, USA, 2010. ACM.

\bibitem[FGS11]{FGS11}
S.~Fehr, R.~Gelles, and C.~Schaffner.
\newblock Security and composability of randomness expansion from {B}ell
  inequalities.
\newblock Manuscript, 2011.

\bibitem[GKK{\etalchar{+}}08]{GKKRW}
D.~Gavinsky, J.~Kempe, I.~Kerenidis, R.~Raz, and R.~de~Wolf.
\newblock Exponential separation for one-way quantum communication complexity,
  with applications to cryptography.
\newblock {\em SIAM Journal of Computing}, 38(5):1695--1708, 2008.
\newblock Preliminary version in \emph{Proc. of STOC 2007}.

\bibitem[GUV07]{GUV07}
V.~Guruswami, C.~Umans, and S.~Vadhan.
\newblock Unbalanced expanders and randomness extractors from parvaresh-vardy
  codes.
\newblock In {\em Proceedings of the Twenty-Second Annual IEEE Conference on
  Computational Complexity}, pages 96--108, Washington, DC, USA, 2007. IEEE
  Computer Society.

\bibitem[HR03]{HR01}
T.~Hartman and R.~Raz.
\newblock On the distribution of the number of roots of polynomials and
  explicit weak designs.
\newblock {\em Random Structures and Algorithms}, 23(3):235--263, 2003.

\bibitem[IJK06]{IJK06}
R.~Impagliazzo, R.~Jaiswal, and V.~Kabanets.
\newblock Approximately list-decoding direct product codes and uniform hardness
  amplification.
\newblock In {\em Proceedings of the 47th Annual IEEE Symposium on Foundations
  of Computer Science}, pages 187 --196, oct. 2006.

\bibitem[KRS09]{KRS09}
R.~K\"onig, R.~Renner, and C.~Schaffner.
\newblock The operational meaning of min- and max-entropy.
\newblock {\em IEEE Transactions on Information Theory}, 55(9):4337--4347,
  2009.

\bibitem[KT08]{KT07}
R.~K\"onig and B.~Terhal.
\newblock The bounded storage model in presence of a quantum adversary.
\newblock {\em IEEE Transactions on Information Theory}, 54(2):749--762, 2008.

\bibitem[NC00]{NC00}
M.~Nielsen and I.~Chuang.
\newblock {\em Quantum Computation and Quantum Information}.
\newblock Cambridge University Press, 2000.

\bibitem[PAM{\etalchar{+}}10]{Pironio}
S.~Pironio, A.~Acin, S.~Massar, A.~Boyer De~La~Giroday, D.~N. Matsukevich,
  P.~Maunz, S.~Olmschenk, D.~Hayes, L.~Luo, T.~A. Manning, and et~al.
\newblock Random numbers certified by {B}ell's theorem.
\newblock {\em Nature}, 464(7291):10, 2010.

\bibitem[PM11]{PM11}
S.~Pironio and S.~Massar.
\newblock Device-independent randomness expansion secure against quantum
  adversaries.
\newblock Manuscript, 2011.

\bibitem[Ren05]{Ren05}
R.~Renner.
\newblock {\em Security of Quantum Key Distribution}.
\newblock PhD thesis, Swiss Federal Institute of Technology Zurich, September
  2005.

\bibitem[Sha02]{Sha02}
R.~Shaltiel.
\newblock Recent developments in explicit constructions of extractors.
\newblock {\em Bulletin of the European Association for Theoretical Computer
  Science}, 77:67--95, June 2002.

\bibitem[SV84]{SV84}
M.~Santha and U.~V. Vazirani.
\newblock Generating quasi-random sequences from slightly-random sources.
\newblock In {\em Proceedings of the 25th Annual IEEE Symposium on Foundations
  of Computer Science}, pages 434--440, 1984.

\bibitem[Ter04]{Terhal04}
B.~Terhal.
\newblock Is entanglement monogamous?
\newblock {\em IBM J. Res. Dev.}, 48:71--78, January 2004.

\bibitem[Tre01]{Trevisan01}
L.~Trevisan.
\newblock Extractors and pseudorandom generators.
\newblock {\em Journal of the ACM}, 48:860--879, 2001.

\bibitem[TS09]{Ta09}
A.~Ta-Shma.
\newblock Short seed extractors against quantum storage.
\newblock In {\em Proceedings of the 41st annual ACM Symposium on Theory of
  Computing}, pages 401--408, New York, NY, USA, 2009. ACM.

\bibitem[Zuc90]{Zuc90}
D.~Zuckerman.
\newblock General weak random sources.
\newblock In {\em Proceedings of the 31st IEEE Annual Symposium on Foundations
  of Computer Science}, pages 534--543, 1990.

\end{thebibliography}

\appendix

\section{Identifying ``good'' blocks in Protocols~A and~B}\label{app:special}

In this section we prove Claim~\ref{claim:special-it} and Claim~\ref{claim:eve-pred}, which play an analogous role for Theorem~\ref{thm:main-class} and Theorem~\ref{thm:main-quant} respectively: that of identifying a special iteration of the protocol that will be useful to Alice and Bob in the guessing game. 

\begin{proof}[Proof of Claim~\ref{claim:special-it}]
 Let $\BAD'$ be the set of strings $b\in\big(\{0,1\}^k\big)^m$ such that $\Pr(b|\CHSH) > 2^{-n}$. Assumption (i) together with Claim~\ref{claim:smoothcap} show that $\Pr(\BAD'|\CHSH) \geq \eps$, so using (ii) we get $\Pr(\CHSH|\BAD')\geq \eps^2$. Define $\BAD$ to contain only those strings $b\in\BAD'$ such that $\Pr(\CHSH|B=b) \geq \eps^2/2$; we have $\Pr(\BAD) \geq (\eps^2/2)\Pr(\BAD') \geq \eps^4/2$. 

By definition of $\BAD$, using Baye's rule we have that for every $b=(b_1,\ldots,b_m)\in \BAD$,
$$ \Pr(B=b, \CHSH) \,=\, \prod_{i=1}^m \Pr(B_i = b_i, \CHSH_i | \CHSH_{<i}, B_{<i} = b_{<i}) \,\geq\, 2^{-n} \eps^2/2.$$
Taking logarithms on both sides,
$$\sum_{i=1}^m -\log\Pr(B_i = b_i, \CHSH_i | \CHSH_{<i}, B_{<i} = b_{<i}) \,\leq\, n +  3\log (1/\eps)\,\leq\, 2n,$$
assuming as in the statement of the claim that $\eps$ is not too small. By an averaging argument at least $3/4$ of all $i\in [m]$ are such that a fraction at least $3/4$ of all $b\in \BAD$ are such that
\beq\label{eq:spec-1}
\Pr(B_i = b_i, \CHSH_i | \CHSH_{<i}, B_{<i} = b_{<i}) \,\geq\, 2^{-48(n/m)} \,\geq \, 2^{-48/C}.
\eeq
Let $S$ be the set of $i\in [m]$ such that~\eqref{eq:spec-1} holds for a fraction at least $3/4$ of $b\in\BAD$. $S$ is a fixed subset of blocks of size $|S| \geq (3/4)m$.


We apply the same reasoning once more, focusing on the CHSH constraint being satisfied in a Bell block. Let $N$ be a random variable equal to the number of Bell blocks that fall in $S$. Since $S$ is fixed, and each block is chosen to be a Bell block independently with probability $1/\ell$, $N$ is concentrated around $\Delta (|S|/m)\geq \Delta/2$. By a Chernoff bound, the probability that $N$ is less than $\Delta/4$ is at most $e^{-\Delta/16}$, which given our choice of $\Delta$ can be neglected in front of the other events we are considering. For the remainder of the proof we assume that $N\geq C/4$. Let $T_j$ be a random variable denoting the index of the $j$-th Bell block, among those that fall in $S$. Starting from $\Pr(\CHSH|\BAD) \geq \eps^2/2$ and using Baye's rule as before, 
$$\sum_{j=1}^N -\log\Pr(\CHSH_{T_j} | \CHSH_{<T_j},\BAD) \,\leq\, 2\log (1/\eps)+1\,\leq\, 3\log (1/\eps). $$
By an averaging argument and using our assumed lower bound on $N$ this implies that a fraction at least $1/2$ of the Bell blocks in Protocol~A are such that
\beq\label{eq:spec-2b}
\Pr(\CHSH_{T_j} | \CHSH_{<T_j},\BAD) \geq \eps^{24/C}.
\eeq
Let $T_j\in T\cap S$ be a Bell block for which~\eqref{eq:spec-2b} holds. For a fraction at least $\eps^{24/C}/2$ of $b\in\BAD$ it holds that 
\beq\label{eq:spec-2c}
\Pr(\CHSH_{T_j} | \CHSH_{<T_j},B=b) \geq \eps^{24/C}/2.
\eeq
By the union bound, at iteration $T_j$~\eqref{eq:spec-2c} will hold simultaneously with~\eqref{eq:spec-1} for a subset $G$ of $\BAD$ of size at least 
$$ \Pr(G) \,=\, \Pr(G|\BAD)\Pr(\BAD)\,\geq\, \big( \eps^{24/C}/2 - 1/4 \big) \eps^4/2\,\geq\, \eps^5$$
given our choice of parameters. By choosing $C$ large enough,~\eqref{eq:spec-1} implies item~1 in the claim, and~\eqref{eq:spec-2c} implies item~2, given the choice of $\Delta$ made in the claim.
\end{proof}

\bigskip

\begin{proof}[Proof of Claim~\ref{claim:eve-pred}]
By definition, $ \Pr\big( G^B \big) \geq p_s \Pr(\CHSH) \geq p_s \eps$. Conditioned on $G^B$, by Markov's inequality it must be that $d_H(E^B, B) < 0.01$ on a fraction at least $1-100f_e$ of blocks in which the input to $B$ was $(A,0)$. Let $f'_e = 100 f_e$. Let $\eta = 2^{- 10^{-5} f'_e |T|/(2\cdot 100^2)}$, and assume $C$ chosen large enough so that $\eta \leq p_s \eps/6 = \Omega(n^{-8(1+\alpha)})$. This is possible since $|T|$ is sharply concentrated  around $C\log^2\ell$ and $f'_e = \Omega(1/\log \ell)$. 

Among the blocks in which Eve's prediction is correct, nothing distinguishes those Bell blocks in which $\boxB$'s input is $(A,0)$: indeed, we may think of those only being designated as Bell blocks after Eve has made her prediction. By a Chernoff bound the probability that more than a fraction $2f'_e$ of such blocks fall into those for which $G^B_j$ does not hold is upper-bounded by $\eta$. Hence the following holds
\beq
\Pr\big( \Es{j\in T:\,Y_j=(A,0)} \, G^B_j > 1-2f'_e | G^B \big) \,\geq\, 1-\eta.
\eeq
Since $V$ is a fixed subset of $[km]$ of size $|V|=O(m^\gamma\log^2 m)$, the probability that any of the randomly chosen $O(\log^2 \ell)$ Bell blocks intersects it is at most $O(m^{-1+\gamma}\log^4 m ) = O(n^{2-1/\gamma}\log^4 n)$ for large enough $n$. We assume as in the statement of Theorem~\ref{thm:main-quant} that $\gamma$ is chosen large enough so that this is much smaller than (our upper bound on) $\eta$, i.e. $\gamma < 1/(9+8\alpha)$. For the remainder of the proof we will neglect the chance of this happening. 

Conditioning further on $\CHSH$ can only blow-up the error by a factor $1/\Pr(\CHSH|G^B)\leq 1/(p_s\eps)$. 
In that case $G^A=G^B$ (Eve's prediction only depends on the advice bits she is given), so we obtain: 
\beq\label{eq:evepred0b}
\frac{\Pr\big( \Es{j\in T:Y_j=0}\, G^B_j > 1-2f'_e,\CHSH | G^A \big)}{\Pr\big(\CHSH|G^A\big)}\,=\,\Pr\big( \Es{j\in T:\,Y_j=(A,0)}\, G^A_j > 1-2f'_e | G^A, \CHSH \big)     \,\geq\, 1-\eta/(p_s\eps).
\eeq
Suppose Eve makes more than a fraction $5f'_e$ of errors in predicting $\boxA$'s output on those Bell blocks in which its input is $(A,0)$. Some of those will later be randomly chosen by Bob as Bell blocks, and by a Chernoff bound with probability at least $1-\eta$ the input to $\boxB$ will also be $(A,0)$ in at least $40\%$ of those blocks. Whenever this happens, Eve's prediction will be wrong on a total fraction more than $2f'_e$ of $\boxB$'s $(A,0)$-input Bell blocks, contradicting~\eqref{eq:evepred0b}. Indeed, whenever $\CHSH$ holds, if the input to both boxes is $(A,0)$ then Eve being correct in predicting $\boxB$'s output is equivalent to her being correct in predicting $\boxA$'s output. Hence the following holds:
\begin{align}
\Pr\big( \Es{j\in T:\,X_j=(A,0)} \,G^A_j > 1-5f'_e ,\CHSH | G^A \big) &\geq \Pr\big( \Es{j\in T:\,Y_j=(A,0)}\, G^A_j > 1-2f'_e,\CHSH | G^A \big)-\eta\notag\\
&\geq (1-\eta/(p_s\eps)) \Pr\big(\CHSH|G^A\big) - \eta \notag\\
&\geq (1-2\eta/(p_s\eps)) \Pr\big(\CHSH|G^A\big),\label{eq:evepred1}
\end{align}
where the last inequality uses $\Pr(\CHSH|G^A)\geq p_s\eps$. As before, since $G^A\wedge \CHSH = G^B \wedge \CHSH$,~\eqref{eq:evepred1} implies the following: 
\beq\label{eq:evepred1b}
\Pr\big( \Es{j\in T:\,X_j=(A,0)} \,G^B_j > 1-5f'_e | G^B,\CHSH \big) \,\geq\, 1-2\eta/(p_s\eps).
\eeq
Next, suppose Eve makes a prediction that is wrong on a fraction at least $14f'_e$ of the Bell blocks, irrespective of Bob's inputs. Then again with high probability at least $40\%$ of the inputs to $\boxA$ in those blocks will be $(A,0)$, implying that Eve is wrong on more than a fraction $5f'_e$ of $\boxA$'s $(A,0)$ inputs, and contradicting~\eqref{eq:evepred1b}. Hence the following is proven just as~\eqref{eq:evepred1} was:
\beq\label{eq:evepred2}
\Pr\big( \Es{j\in T}\, G^B_j > 1- 14f'_e| G^B, \CHSH \big) \,\geq\, 1-3\eta/(p_s\eps).
\eeq
Hence
$$\Pr\big( \Es{j\in T}\, G^A_j > 1-14f'_e | G^A, \CHSH \big) \,\geq\, 1-3\eta/(p_s\eps) ,$$
which is greater than $1/2$ given our choice of $\eta$. 
Removing all conditioning, whenever Eve is given advice bits by Alice, it holds that
$$ \Pr\big( \Es{j\in T}\, G^A_j > 1-14f'_e , \CHSH\big) \,\geq\, \Omega(p_s\eps).$$
\end{proof}


\section{Proof of Lemma~\ref{lem:chain-rule}}\label{app:kxor}

In this appendix we give the proof of Lemma~\ref{lem:chain-rule}. The proof crucially uses properties of a specific extractor construction, first shown to be secure in the presence of quantum bounded-storage adversaries in~\cite{Ta09}, and in the more general setting of quantum bounded-information adversaries in~\cite{DVPR11}. We first describe the extractor. 

\subsection{The $t$-XOR extractor}

The $t$-XOR extractor $E_t$, parametrized by an integer $t$, follows Trevisan's general extractor construction paradigm~\cite{Trevisan01}. It is based on two main ingredients, the $t$-XOR code and a combinatorial design construction due to Hartman and Raz~\cite{HR01}. For us, only the details of the $t$-XOR code will be important. 

\paragraph{The $t$-XOR code.}
Given integers $m$ and $t\leq m$, let $C_t:\{0,1\}^m \to \{0,1\}^{ {m\choose t}}$ map an $m$-bit string to the string of parities of all subsets of $t$ out of its $m$ bits. Two properties of this encoding will be relevant for us. The first is that it is locally computable: each bit of the code only depends on $t$ bits of the input. The second is that it is approximately list-decodable (we summarize its parameters in Lemma~\ref{lem:listxor} below).  

\paragraph{Combinatorial designs.} Given integers $s,m,r$ and $\rho>0$, a collection of subsets $S_1,\ldots,S_r\subseteq [s]$ is called a $(s,m,r,\rho)$ weak design if for all $i\in [r]$, $|S_r|=m$ and for all $j$, $\sum_{i<j} 2^{|S_i\cap S_j|} \leq \rho(r-1)$. For our purposes it will suffice to note that Hartman and Raz~\cite{HR01} proved the existence of a $(s,m,r,1+\gamma)$ design for every $m$, $0<\gamma<1/2$, $s = O(m^2 \log 1/\gamma)$ and $r>s^{\Omega(\log s)}$. 

\paragraph{The $t$-XOR extractor.} We define the extractor that we will use in the proof of Lemma~\ref{lem:chain-rule}. 

\begin{definition} Let $m,r,t,s$ be given integers such that $t = O(\log m)$ and $s = O(\log^4 n)$. Then $E_t:\{0,1\}^m \times \{0,1\}^s \to \{0,1\}^r$  maps $(x,y)\in\{0,1\}^m\times\{0,1\}^s$ to $C_t(x)_{y_{S_1}},\ldots, C_t(x)_{y_{S_r}}$, where $(S_1,\ldots,S_r)$ is a $(s,t\log m,r,5/4)$ design and $y_{S_i}$ designates the bits of $y$ indexed by $S_i$, interpreted as a $t$-element subset of $[m]$. 
\end{definition}

While, as shown in Corollary~5.11 in~\cite{DVPR11}, $E_t$ is a strong extractor with good parameters, we will not use this fact directly. Rather, we will use specific properties that arise from the ``reconstruction paradigm''-based \emph{proof} that it is an extractor secure against quantum adversaries, and one may argue that Lemma~\ref{lem:chain-rule} is implicit in the proof of security of $E_t$ given in~\cite{DVPR11}. Since it does not follow directly from the mere statement that $E_t$ is an extractor, we give more details here. We will show the following lemma, which is more general than Lemma~\ref{lem:chain-rule}. 

\begin{lemma}\label{lem:ext_adv} Let $m,r,t$ be integers such that $t=O(\log^2 m)$ and $\eps>0$. 
Let $\rho_{XE}$ be a cq-state such that $X$ is a random variable distributed over $m$-bit strings. Let $U_r$ be uniformly distributed over $r$-bit strings, and suppose that 
\beq\label{eq:ass-eve}
\|\rho_{Ext(X,Y)E} - \rho_{U_r}\otimes \rho_E \big\|_{tr} \, > \, \eps,
\eeq
i.e. an adversary Eve holding register $E$ can distinguish the output of the extractor from a uniformly random $r$-bit string. Then there exists a fixed subset $V\subseteq [m]$ of size $|V| = O(tr)$ such that, given the string $X_V$ as advice, with probability at least $\Omega(\eps^2/r^2)$ over the choice of $x\sim p_X$ and her own randomness Eve can output a list of $\ell= O(r^4/\eps^4)$ strings $\tilde{x}^1,\ldots,\tilde{x}^\ell$ such that there is an $i\in [\ell]$, $d_H(\tilde{x}^i , x) \leq (2/t)\ln(4r/\eps) $. 
\end{lemma}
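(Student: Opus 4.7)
The plan is to follow the ``reconstruction paradigm'' for Trevisan-style extractors in its quantum incarnation, originating in~\cite{Ta09} and sharpened in~\cite{DVPR11}; the only conceptually novel bookkeeping is tracking the success probability of the reconstruction, rather than merely using its existence to prove the extractor secure. First, I would apply a standard hybrid argument to~\eqref{eq:ass-eve}: a distinguisher with advantage $\eps$ for the $r$-bit output gives, for some prefix length $i\in\{0,\ldots,r-1\}$, a measurement on $E$ that predicts the $(i+1)$-st output bit $C_t(X)_{y_{S_{i+1}}}$ with bias $\eps/r$ over random guessing, when also handed the first $i$ output bits. Averaging over the seed $y$ as well, fix particular values $i^\ast,y^\ast$ for which the bias remains $\Omega(\eps/r)$ on average over $x\sim p_X$.

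Once $i^\ast$ and $y^\ast$ are fixed, the $r$ indices $y^\ast_{S_1},\ldots,y^\ast_{S_r}$ are determined $t$-subsets of $[m]$. Let $V=\bigcup_{j\leq i^\ast}y^\ast_{S_j}$, so that $|V|\leq ti^\ast=O(tr)$. Given $X_V$ as classical advice, Eve can compute the first $i^\ast$ output bits of $E_t(X,y^\ast)$ on her own and feed them into her predictor. Because $(S_1,\ldots,S_r)$ is a weak $(s,t\log m,r,5/4)$ design, the standard consequence is that for typical $i^\ast$ the index $y^\ast_{S_{i^\ast+1}}$ remains close to uniform over $t$-subsets of $[m]$ even conditioned on the previous subsets. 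The net effect is a single POVM on $E$, parametrized by $X_V$, that for every $x$ outputs a guess for $C_t(x)_\sigma$ with bias $\Omega(\eps/r)$ over random, when $\sigma$ is a uniformly random $t$-subset of $[m]$.

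The main obstacle, and the only point where a quantum adversary incurs a genuine loss over a classical one, is the next step. The POVM above predicts a \emph{single} random coordinate of $C_t(X)$, but list-decoding requires a single string that agrees with $C_t(X)$ on many coordinates simultaneously. Classically one simply reruns the predictor independently for every coordinate; quantumly the measurements disturb $E$ and need not commute. Following Koenig--Terhal~\cite{KT07} (and as carried out in~\cite{DVPR11}) one may assume without loss of generality that Eve's per-coordinate predictor is the pretty-good measurement associated with her cq-state, and then exploit the square-root structure of this measurement to produce a single ``joint'' measurement whose outcome is a string $W\in\{0,1\}^{\binom{m}{t}}$ with expected relative agreement $1/2+\Omega(\eps/r)$ with $C_t(X)$. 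A Markov-style argument then gives that, with probability $\Omega(\eps^2/r^2)$ over $x\sim p_X$ and Eve's internal randomness, the realized $W$ agrees with $C_t(x)$ on a fraction at least $1/2+\Omega(\eps/r)$ of its $\binom{m}{t}$ coordinates.

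Finally, I would invoke the approximate list-decoding guarantee of the $t$-XOR code (Lemma~\ref{lem:listxor}): any string that agrees with $C_t(x)$ on a fraction $1/2+\eta$ of coordinates can be decoded to a list of at most $O(1/\eta^4)$ candidate strings $\tilde x^1,\ldots,\tilde x^\ell$, one of which lies within relative Hamming distance $(2/t)\ln(1/\eta)$ of $x$. Plugging in $\eta=\Omega(\eps/r)$ produces list size $\ell=O(r^4/\eps^4)$ and agreement fraction $1-(2/t)\ln(4r/\eps)$, matching the statement. The bookkeeping to track is exactly the three sources of loss: the $1/r$ hybrid loss, the quadratic $\eps/r\mapsto\eps^2/r^2$ loss in turning the pretty-good bit-predictor into a string-producing measurement, and the $\poly(r/\eps)$ list-decoding blow-up.
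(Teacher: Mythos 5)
Your proposal follows the same three-step reconstruction argument as the paper: a hybrid step producing a single-$t$-XOR predictor with bias $\eps/r$ given $O(tr)$ advice bits, the Koenig--Terhal pretty-good-measurement trick to obtain a single measurement whose outcome's $t$-XORs agree with $C_t(X)$ on an advantaged fraction of coordinates, and approximate list-decoding of the XOR code. The structure and the final parameters match the paper. Two small bookkeeping points are worth flagging. First, you state that the joint measurement yields expected agreement $\tfrac12+\Omega(\eps/r)$ and that list-decoding at advantage $\eta$ gives a list of size $O(1/\eta^4)$; the actual situation (as in the paper's Claim~\ref{claim:ext_adv-2} and Lemma~\ref{lem:listxor}) is agreement $\tfrac12+\Omega(\eps^2/r^2)$ and list size $O(1/\eta^2)$ --- the two errors cancel, so your final list size $O(r^4/\eps^4)$ comes out right, but the intermediate quantities as written are not. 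The Markov step you invoke is also slightly mis-calibrated: from expected advantage $c$ one gets probability $\Omega(c)$ of realized advantage $\Omega(c)$, so the realized advantage fed to the list-decoder should be $\Omega(\eps^2/r^2)$, not $\Omega(\eps/r)$. Second, in the hybrid step you fix the entire seed $y^\ast$ and then say the $(i^\ast+1)$-st index ``remains close to uniform'' --- but with $y^\ast$ fully fixed that index is deterministic; one should fix $y$ only on $S_1\cup\dots\cup S_{i^\ast}$ and leave the remaining seed bits random, which is what the paper does (its Claim~\ref{claim:ext_adv-1} retains a uniformly random $Y\in\binom{[m]}{t}$). These are slips rather than gaps; the approach is correct.
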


It is not hard to see why Lemma~\ref{lem:ext_adv} implies Lemma~\ref{lem:chain-rule}. First note that if $r$ is chosen in Lemma~\ref{lem:ext_adv} so that $r > 2H_\infty^\eps(X|E)$ then the assumption~\eqref{eq:ass-eve} is automatically satisfied.\footnote{The extra randomness coming from the seed of the extractor will be small, as its size can be taken to be $s = O(\log^4 m)$.} The conclusion of Lemma~\ref{lem:chain-rule} then follows from that of Lemma~\ref{lem:ext_adv} by having Eve output a random string out of her $\ell$ predictions, and choosing $t = \Omega(\log^2 m)$ to ensure that  $(2/t)\ln(4r/\eps) \leq 1/\log m$. 

\medskip

In the remainder of this section we sketch the proof of Lemma~\ref{lem:ext_adv}. The first step, explained in Section~\ref{sec:hybrid}, consists in using a hybrid argument to show that, given~\eqref{eq:ass-eve}, Eve can predict a random $t$-XOR of $X$'s bits with reasonable success probability, given sufficiently many ``advice bits'' about $X$. In the second step, detailed in Section~\ref{sec:kt}, we show using an argument due to Koenig and Terhal~\cite{KT07} that this implies the adversary can in fact recover most $t$-XORs of $X$, simultaneously. Finally, in Section~\ref{sec:list} we use the list-decoding properties of the XOR code to show that as a consequence the adversary can with good probability produce a string that agree with $X$ on a large fraction of coordinates.  

\subsection{The hybrid argument}\label{sec:hybrid}

Suppose that~\eqref{eq:ass-eve} holds. Proposition~4.4 from~\cite{DVPR11} shows that a standard hybrid argument, together with properties of Trevisan's extractor (specifically the use of the seed through combinatorial designs), can be used to show the following claim.

\begin{claim}\label{claim:ext_adv-1} There exists a subset $V\subseteq [m]$ of size $|V|=O( tr)$ such that, given the bits $X_V$, Eve can predict a random $t$-XOR of the bits of $X$ with advantage $\eps/r$. Formally, 
\beq\label{eq:eve-1}
\big\| \rho_{C_t(X)_Y Y V E} - \rho_{U_1}\otimes \rho_{Y} \otimes \rho_{VE} \big\|_{tr} \,>\, \frac{\eps}{r},
\eeq
where $Y$ is a random variable uniformly distributed over $\big[{m \choose t}\big]$ and $V$ is a register containing the bits of $X$ indexed by $V$. 
\end{claim}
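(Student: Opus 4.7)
The strategy is to apply a standard hybrid argument combined with the combinatorial design structure of $E_t$, following the Trevisan-style reconstruction paradigm adapted to quantum adversaries in~\cite{DVPR11}. The point is to convert the distinguishing hypothesis~\eqref{eq:ass-eve} against a full $r$-bit output into an advantage for predicting a single bit of the $t$-XOR code.

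First I would introduce hybrid states $\rho^{(i)}_{HYE}$ for $i=0,\ldots,r$, where the first $i$ bits of the register $H$ are the true extractor outputs $C_t(X)_{Y_{S_1}},\ldots,C_t(X)_{Y_{S_i}}$ and the remaining $r-i$ bits are independent uniform bits. Since $\rho^{(r)}=\rho_{E_t(X,Y)YE}$ and $\rho^{(0)}=\rho_{U_r}\otimes \rho_{YE}$, the triangle inequality together with~\eqref{eq:ass-eve} yields an index $i^*\in[r]$ with $\|\rho^{(i^*)}-\rho^{(i^*-1)}\|_{tr}>\eps/r$. Because these two states differ only in the $i^*$-th bit of $H$ (the true value $C_t(X)_{Y_{S_{i^*}}}$ versus a uniform bit), a standard distinguisher-to-predictor conversion, valid also in the quantum setting, produces a measurement on $(H_{[i^*-1]},Y,E)$ predicting $C_t(X)_{Y_{S_{i^*}}}$ with advantage $\eps/r$.

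Next I would fix the seed bits indexed by $[s]\setminus S_{i^*}$ to values preserving this advantage (using Markov on an averaging argument). Under any such fixing, the marginal of $Y_{S_{i^*}}$ is uniform on $\{0,1\}^{|S_{i^*}|}=\{0,1\}^{t\log m}$, so, interpreted as $t$ indices in $[m]$, it is within $O(t^2/m)=o(1)$ statistical distance of the uniform distribution over $t$-element subsets of $[m]$. I then define $V\subseteq[m]$ to be the union, over every possible realization of $Y_{S_{i^*}}$ and every $j\neq i^*$, of the $t$-subset encoded by $Y_{S_j}$. Given $X_V$ as advice, the adversary can, for any realization of $Y_{S_{i^*}}$, compute each earlier output bit $C_t(X)_{Y_{S_j}}$ with $j<i^*$, fill in uniform bits for positions $>i^*$ locally, and finally invoke the predictor from the hybrid step.

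The main hurdle, and the place where the design property is essential, is the bound $|V|=O(tr)$. For each $j\neq i^*$, only the $|S_j\cap S_{i^*}|$ bits of $Y_{S_j}$ lying in $S_{i^*}$ vary, so $Y_{S_j}$ takes at most $2^{|S_j\cap S_{i^*}|}$ distinct values and contributes at most $t\cdot 2^{|S_j\cap S_{i^*}|}$ indices to $V$. By the weak-design guarantee $\sum_{i<j}2^{|S_i\cap S_j|}\leq \rho(r-1)$ with $\rho=5/4$, summed suitably and combined with an averaging argument over the choice of $i^*$, one may simultaneously arrange that the predictive advantage is $\Omega(\eps/r)$ and that $\sum_{j\neq i^*}2^{|S_j\cap S_{i^*}|}=O(r)$, giving $|V|=O(tr)$ as required. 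The only subtle point is to perform the averaging step over $i^*$ without losing more than a constant factor in the predictive advantage, which is precisely what having a \emph{fixed} $V$ (independent of the random $Y_{S_{i^*}}$) buys us in the subsequent use of the claim.
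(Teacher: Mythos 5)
Your argument reconstructs in detail the hybrid/reconstruction argument that the paper only invokes by citation (Proposition~4.4 of~\cite{DVPR11}), so the approach is the same one; however, there is a small but conceptually relevant slip in the last paragraph. The advice set $V$ only needs to cover the indices of $X$ that the predictor actually reads, and after the hybrid step those are exactly the positions entering $C_t(X)_{Y_{S_j}}$ for $j<i^*$ --- the ``upper'' bits $j>i^*$ are replaced by locally generated uniform bits that do not touch $X$ at all, so they contribute nothing to $V$. Once you restrict to $j<i^*$, the weak-design inequality $\sum_{j<i^*}2^{|S_j\cap S_{i^*}|}\leq\rho(r-1)$ holds \emph{for every} $i^*$, so you immediately get $|V|\leq t\cdot\rho(r-1)=O(tr)$ with no averaging over $i^*$ whatsoever. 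Your appeal to ``an averaging argument over the choice of $i^*$'' to control $\sum_{j\neq i^*}2^{|S_j\cap S_{i^*}|}$ is not only unnecessary but also not available: the hybrid step has already pinned $i^*$ down as the index where the trace-distance gap exceeds $\eps/r$, so you cannot afterwards average over it without sacrificing the advantage. The fix is simply to replace ``every $j\neq i^*$'' by ``every $j<i^*$'' in the definition of $V$; with that change the argument is correct and matches the cited proof.

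One further minor remark: when you fix the seed bits outside $S_{i^*}$ by Markov, you should note that you also need to average the predictive advantage over those fixings and keep only fixings that retain, say, half the advantage --- that is presumably what you intend by ``using Markov on an averaging argument'', but it is worth making explicit that this costs only a constant factor in the final advantage $\Omega(\eps/r)$, consistent with the claim's statement.
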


\subsection{Recovering all $t$-XORs.}\label{sec:kt}

The next step in the proof of Lemma~\ref{lem:ext_adv} is to argue that Eq.~\eqref{eq:eve-1} implies that an adversary given access to $E'=VE$ can predict not only a random XOR of $X$, but a string $Z$ of length ${m\choose t}$ such that $Z$ agrees with the string $C_t(X)$ of all $t$-XOR's of $X$ in a significant fraction of positions. Classically this is trivial, as one can just repeat the single-bit prediction procedure guaranteed by~\eqref{eq:eve-1} for all possible choices $Y$ of the $t$ bits whose parity one is trying to compute. In the quantum setting it is more tricky. We will follow an argument from~\cite{KT07} showing that~\eqref{eq:eve-1} implies that there is a single measurement, independent of $Y$, that one can perform on $E$ and using the (classical) result of which one can predict the bits $C_t(X)_Y$ with good success on average (over the measurement's outcome and the choice of $Y$). 

\begin{claim}\label{claim:ext_adv-2} Suppose~\eqref{eq:eve-1} holds. Then there exists a measurement $\mathcal{F}$, with outcomes in $\{0,1\}^m$, such that 
\beq\label{eq:eve-5}
\Pr_{x\sim p_X,\,y\sim U_{t\log m}}\big(\, C_t(x)_Y \,= \,C_t(\mathcal{F}(VE))_y\,\big) \geq \frac{1}{2} + \frac{\eps^2}{4r^2}\, ,
\eeq
where $\mathcal{F}(VE)$ denotes the outcome of $\mathcal{F}$ when performed on the cq-state $\rho_{VE}$.
\end{claim}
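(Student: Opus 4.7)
The strategy is to convert the $y$-indexed family of single-bit prediction measurements implied by~\eqref{eq:eve-1} into one $y$-independent POVM $\mathcal{F}$ on $VE$ with outcomes in $\{0,1\}^m$, at the cost of squaring the bias.

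I would begin by unpacking~\eqref{eq:eve-1}: since both $Y$ and $C_t(X)_Y$ are classical, the optimal distinguisher between the two states takes the form of a family $\{M_y^0, M_y^1\}_{y}$ of two-outcome POVMs on $VE$, one per seed, giving
\[\Exs{x\sim p_X,\,y}{\Tr\big(M_y^{C_t(x)_y}\rho_{x,V,E}\big)}\,\geq\,\tfrac{1}{2} + \tfrac{\eps}{2r}.\]
Equivalently, writing $N_y := M_y^0 - M_y^1 \in [-I,I]$, this rearranges to the ``sign-form'' bound
\[\Exs{x,y}{(-1)^{C_t(x)_y}\Tr\big(N_y\rho_{x,V,E}\big)} \,\geq\, \tfrac{\eps}{r}.\]

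Following the approach of Koenig and Terhal~\cite{KT07}, the plan is to take $\mathcal{F}$ to be a suitable pretty-good measurement (PGM) associated with the ensemble $\{p_X(x), \rho_{x,V,E}\}_{x\in\{0,1\}^m}$. The key fact from~\cite{KT07} (ultimately a Barnum--Knill style argument) is that for any state-discrimination task the PGM is optimal up to a quadratic factor, and moreover its ``two-outcome coarsenings'' obtained by grouping outcomes according to the value of $C_t(\cdot)_y$ predict each linear functional $C_t(X)_y$ with bias at least the square of the optimal bias. Concretely, I would (i) argue that each $M_y$ can be replaced, without losing more than a square-root factor in bias, by the PGM that directly discriminates the two states $\rho_{y,0}$ and $\rho_{y,1}$ on $VE$ conditioned on $C_t(X)_y = 0,1$; (ii) observe that all such ``XOR-projected'' PGMs arise as the marginal of a single, $y$-independent PGM on $VE$ whose outcomes live in $\{0,1\}^m$; and (iii) conclude that the procedure ``sample $\tilde{x} \sim \mathcal{F}(\rho_{x,V,E})$, output $C_t(\tilde x)_y$'' predicts $C_t(x)_y$ with bias at least $(\eps/r)^2/2$, yielding the claimed $\tfrac12 + \eps^2/(4r^2)$.

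The main obstacle lies in step (ii): showing that the PGMs for all the different parities $C_t(X)_y$ are simultaneously coarsenings of a single ``finer'' PGM for $X$ itself. This uses the specific algebraic form of the PGM together with the linearity of the XOR code (every output bit of $C_t$ is a linear functional of the input), so that the map $\tilde{x} \mapsto (C_t(\tilde x)_y)_y$ commutes with the marginalization of the POVM outcomes. It is also the origin of the quadratic loss: the operators $\sqrt{p_X(x)\rho_{x,V,E}}$ entering the PGM do not commute across different $x$, so the various per-$y$ predictions cannot be performed simultaneously without disturbance, and the KT bound quantifies exactly the unavoidable loss incurred when packing them into one measurement.
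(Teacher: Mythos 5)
Your plan follows the paper's proof closely: pass to the per-$y$ pretty-good measurements at a quadratic cost in bias via the Koenig--Terhal lemma, then observe that all of these two-outcome PGMs are coarsenings of the single PGM $\mathcal{F}$ with POVM elements $F^x = p_X(x)\,\rho_{VE}^{-1/2}\rho_x\,\rho_{VE}^{-1/2}$, which has outcomes in $\{0,1\}^m$. One small misattribution in step (ii): the identity $E_y^z = \sum_{x:\,C_t(x)_y=z} F^x$ uses only the linearity of the conjugation map $\rho\mapsto\rho_{VE}^{-1/2}\rho\,\rho_{VE}^{-1/2}$ and the fact that $C_t(\cdot)_y$ is a deterministic function of $x$; the linearity of the XOR code plays no role here and is needed only for the list-decoding step later.
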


\begin{proof} Our argument closely follows the proof of Theorem~III.1 from~\cite{KT07}. Given an arbitrary cq-state $\rho_{ZQ}$, define the non-uniformity of $Z$ given $Q$ as
$$d(Z\leftarrow Q) \,:=\, \big\| \rho_{ZQ} - \rho_{U_z} \otimes \rho_Q \big\|_{tr}.$$
Let $\rho_x$ denote the state contained in registers $VE$, conditioned on $X=x$. For a fixed string $y$, define two states
$$\rho_0^y \,:=\, \sum_{x:\, C_t(x)_y=0} \,p_X(x)\, \rho_x \qquad\text{and} \qquad \rho_1^y \,:=\, \sum_{x:\, C_t(x)_y=1} \,p_X(x)\, \rho_x.$$
Then, by definition $d\big(C_t(X)_y \leftarrow VE \big) = \big\| \rho_0^y - \rho_1^y \big\|_{tr}$ is the adversary's maximum success probability in distinguishing those states $\rho_x$ which correspond to an XOR of $0$ from those which correspond to an XOR of $1$. Let $\mathcal{E}_y = \big\{E_y^0,E_y^1\big\}$ be the pretty good measurement corresponding to the pair of states $\big\{\rho_0^y,\rho_1^y\big\}$: 
$$E_y^0 \,=\, \rho_{VE}^{-1/2}\rho_0^y\,\rho_{VE}^{-1/2}\qquad\text{and}\qquad E_y^1\,=\, \rho_{VE}^{-1/2}\rho_1^y\,\rho_{VE}^{-1/2},$$
where $\rho_{VE} = \sum_x P_X(x) \rho_x$. Lemma~2 from~\cite{KT07} (more precisely, Eq.~(19)), shows that the following holds as a consequence of~\eqref{eq:eve-1}: 
\beq\label{eq:eve-2}
\sqrt{ \Es{y}\big[ \,2\,d\big(C_t(X)_y \leftarrow \mathcal{E}^y(VE) \big)\big]} + d(C_t(X)_Y\leftarrow Y) \,>\, \frac{\eps}{r},
\eeq
where $\mathcal{E}^y(VE)$ is the result of the POVM $\mathcal{E}^y$ applied on $\rho_{VE}$, and $ d(C_t(X)_Y\leftarrow Y)$ is the distance from uniform of the one-bit extractor's output, in the absence of the adversary. We may as well assume this term to be small: indeed, if it is more than $\eps/(2r)$ then~\eqref{eq:eve-5} is proved without even having to resort to the quantum system $E$. Hence~\eqref{eq:eve-2} implies
$$
 \Exs{y}{ \,d\big(C_t(X)_y \leftarrow \mathcal{E}_{pgm}^y(VE) \big) \,}\,>\, \frac{\eps^2}{2r^2}\, ,
$$
which can be equivalently re-written as 
\beq\label{eq:eve-3b}
 \Exs{y}{ \, \Tr\big(E_y^0 \,\rho_y^0 \big) + \Tr\big(E_y^1 \,\rho_y^1 \big) \,}\,>\, \frac{1}{2}+\frac{\eps^2}{4 r^2}\, .
\eeq
Following the argument in~\cite{KT07}, we define a new PGM $\mathcal{F}$ with outcomes in $\{0,1\}^{m}$ and POVM elements $F^x = P_X(x) \rho_{VE}^{-1/2} \rho_x \,\rho_{VE}^{-1/2}$. The important point to notice is that for $z\in\{0,1\}$ we have $E_y^z = \sum_{x:\, C_t(x)_y=z} F^x$, hence~\eqref{eq:eve-3b} can be re-written as
$$
 \Es{y}\bigg[ \, \sum_{b:\, C_t(b)_y=0} \,\Tr\big(F^x \,\rho_y^0 \big) + \sum_{b:\, C_t(b)_y=1} \,\Tr\big(F^x \,\rho_y^1 \big) \,\bigg]\,>\, \frac{1}{2}+\frac{\eps^2}{4 r^2}\, ,
$$
which is exactly~\eqref{eq:eve-5}. 
\end{proof}

\subsection{List-decoding the XOR code.}\label{sec:list}

The following lemma (for a reference, see \cite{IJK06}, Lemma 42) states the list-decoding properties of the $t$-XOR code $C_t$ that are important for us.  

\begin{lemma}\label{lem:listxor}
For every $\eta>2t^2/2^m$ and $z \in (\{0,1\}^m)^t$, there is a list of $\ell \leq 4/\eta^2$ elements $x^1,\ldots,x^\ell \in \{0,1\}^m$ such that the following holds:
for every $z' \in \{0,1\}^m$ which satisfies 
$$  \Pr_{\{y_1,\ldots,y_t\} \in \binom{m}{t}} [z_{(y_1,\ldots,y_t)} = \oplus_{i=1}^t z'_{y_i} ] \geq \frac{1}{2} + \eta, $$
  there is an $i \in [\ell]$ such that $$\Pr_{y \sim \mathcal{U}_N} [x^i_y = z'_y] \geq 1-\delta,$$
with $\delta = (1/t) \ln (2/\eta)$. 
\end{lemma}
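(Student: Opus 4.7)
The plan is a greedy list construction combined with a second-moment argument, the standard route for list-decoding the $t$-XOR code. Build the list $L=\{x^1,\ldots,x^\ell\}$ one element at a time by picking, at each step, any $x\in\{0,1\}^m$ that both (i) satisfies the agreement hypothesis $\Pr_y[z_y=\bigoplus_{k\in y}x_k]\geq \tfrac{1}{2}+\eta$ and (ii) lies at relative Hamming distance strictly greater than $\delta$ from every element already in $L$; stop when no such $x$ remains. By the maximality of this process, every $z'$ satisfying the hypothesis lies within relative distance $\delta$ of some $x^i\in L$, and so the lemma reduces to the size bound $\ell\leq 4/\eta^2$.

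To bound $\ell$, associate to each $x^i\in L$ the test function $f_i:\binom{[m]}{t}\to\{\pm 1\}$ defined by $f_i(y)=(-1)^{z_y\oplus\bigoplus_{k\in y}x^i_k}$, and set $F=\sum_i f_i$. The agreement hypothesis translates to $\mathbb{E}_y[f_i(y)]\geq 2\eta$, so $\mathbb{E}[F]\geq 2\eta\ell$ and hence $\mathbb{E}[F^2]\geq 4\eta^2\ell^2$. Expanding the square,
$$ \mathbb{E}[F^2]\,=\,\ell+\sum_{i\neq j}\mathbb{E}[f_i f_j], $$
and the cross term $\mathbb{E}[f_i f_j]=\mathbb{E}_y\bigl[(-1)^{\bigoplus_{k\in y}(x^i_k\oplus x^j_k)}\bigr]$ is the Fourier-like bias of the $t$-XOR evaluated at the string $x^i\oplus x^j$, whose relative Hamming weight $w_{ij}$ exceeds $\delta$ by greedy construction.

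A short calculation shows that for a string of relative weight $w$, the bias of its $t$-XOR under a uniformly random $t$-subset of $[m]$ equals $(1-2w)^t$ plus a correction of order $t^2/m$ coming from sampling without replacement, which is swallowed by the lower bound on $\eta$ in the hypothesis. With $\delta=(1/t)\ln(2/\eta)$, we have $(1-2\delta)^t\leq e^{-2\delta t}=\eta^2/4$, and therefore $|\mathbb{E}[f_i f_j]|\leq\eta^2/4$. Substituting back,
$$ 4\eta^2\ell^2\,\leq\,\ell+\ell(\ell-1)\cdot\eta^2/4, $$
which rearranges to $\ell\leq 4/\eta^2$, completing the proof.

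The main technical obstacle is that the bias bound $(1-2w)^t\leq\eta^2/4$ degrades when $1-2w$ is close to $-1$, i.e.\ when $x^j$ is close to the complement $\overline{x^i}$ rather than to $x^i$ itself. When $t$ is odd this configuration never arises among candidates satisfying (i), since $C_t(\overline{x^i})=\overline{C_t(x^i)}$, so a string $\delta$-close to $\overline{x^i}$ would only agree with $z$ on a $1/2-\eta+o(1)$ fraction of $t$-tuples and would fail the hypothesis. When $t$ is even, $C_t(\overline{x^i})=C_t(x^i)$, and one can instead enforce in the greedy rule that new elements be $\delta$-far from both $x^j$ and $\overline{x^j}$ for every $x^j$ already added; this costs at most a factor of $2$ in the list size, which is absorbed into the stated constant $4/\eta^2$.
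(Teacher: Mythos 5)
The paper does not give its own proof of this lemma; it simply cites Lemma~42 of~\cite{IJK06}, so you are filling a genuine gap. Your second-moment argument is the standard route to a list-size bound of this kind and the plan is sound, but two steps as written do not close.

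Your treatment of the odd-$t$ case is both incorrect and unnecessary. You claim that an $x^j$ within relative distance $\delta$ of $\overline{x^i}$ must fail the agreement hypothesis; this does not follow, because perturbing a $\delta$-fraction of a string can change its $t$-XOR code on up to a $t\delta = \ln(2/\eta)$ fraction of $t$-sets, which is far larger than $2\eta$, so such an $x^j$ can very well still satisfy the hypothesis. But you do not need that claim. The inequality $4\eta^2\ell^2 \leq \ell + \sum_{i\neq j}\mathbb{E}[f_i f_j]$ needs only an \emph{upper} bound on the cross terms, not a bound on their absolute value. When $t$ is odd the map $u\mapsto u^t$ is increasing, so $w_{ij}>\delta$ gives $(1-2w_{ij})^t \leq (1-2\delta)^t \leq \eta^2/4$ for \emph{every} $w_{ij}\in(\delta,1]$; near $w_{ij}=1$ the term is negative and only helps. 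So the plain greedy rule already suffices for odd $t$. For even $t$ your modified rule is the right idea, but you should state explicitly that the final list consists of the greedily chosen points together with their complements, and note that the displayed inequality actually gives the smaller bound $\ell\leq 4/(15\eta^2)$, which leaves comfortable room for the doubling.

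Separately, the without-replacement correction is not controlled by the stated hypothesis. Coupling an ordered draw of $t$ coordinates to the event that they are all distinct shows the additive gap between $(1-2w)^t$ and the true bias over a uniform $t$-subset is bounded by, and can genuinely be of order, $t^2/m$. The hypothesis $\eta>2t^2/2^m$ is exponentially weak and gives no relation between $t^2/m$ and $\eta^2$, so it does not ``swallow'' this error as you assert. You would need either a hypothesis of the form $t^2/m\leq\eta^2/c$ for some constant $c$ (plausibly the intent of the cited source; the $2^m$ in the denominator looks like a typo), or to carry the $t^2/m$ term explicitly through the inequality and adjust the constants. In the paper's application $m$ is large and $t=O(\log^2 m)$, so the correction is negligible there, but the lemma as written is self-contained and the given hypothesis does not do the work you ask of it.
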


Claim~\ref{claim:ext_adv-2} implies that, with probability at least $\eps^2/(8r^2)$ over the choice of $x$ and over Eve's own randomness, when measuring her system with $\mathcal{F}$ she will obtain a string $\tilde{z}$ whose $t$-XORs agree with those of $x$ with probability at least $1/2+\eps^2/(8r^2)$. Lemma~\ref{lem:listxor} shows that in that case she can recover a list of at most  $2^8 r^4/\eps^4$ ``candidate'' strings $\tilde{z}^i$ such that there exists at least one of these strings which agrees with $x$ at a (possibly adversarial) fraction $1-\delta$ of positions, where $\delta = (2/t)\ln(4r/\eps)$ given our choice of parameters. Hence Lemma~\ref{lem:ext_adv} is proved.

\end{document}